\documentclass{article}
\usepackage{microtype}
\usepackage{graphicx}
\usepackage{booktabs} 
\usepackage{multirow}
\usepackage{array} 
\usepackage{tabularx}
\usepackage{ragged2e} 
\usepackage{booktabs}
\usepackage{multirow}
\usepackage{arydshln}
\usepackage{siunitx}
\usepackage{subcaption}
\usepackage{hyperref}
\usepackage{float}
\usepackage{enumitem}

\usepackage{caption}  

\usepackage[accepted]{icml2026}

\usepackage{amsmath}
\usepackage{amssymb}
\usepackage{mathtools}
\usepackage{amsthm}

\usepackage[table]{xcolor}
\usepackage{colortbl}
\definecolor{pastelblue}{RGB}{80,130,210}

\usepackage[capitalize,noabbrev]{cleveref}

\usepackage{tikz}
\newcommand{\failedcell}[1]{%
	\begin{tikzpicture}
		\node[inner sep=0, outer sep=0] (img) {\phantom{\includegraphics[width=\linewidth]{#1}}};
		\draw[black] (img.south west) rectangle (img.north east);
		\node at (img.center) {\footnotesize\textit{Failed}};
	\end{tikzpicture}%
}

\theoremstyle{plain}
\newtheorem{theorem}{Theorem}[section]
\newtheorem{proposition}[theorem]{Proposition}

\theoremstyle{definition}

\theoremstyle{remark}
\newtheorem{remark}[theorem]{Remark}

\usepackage{amsmath,amsfonts,bm}

\def\1{\bm{1}}

\def\vepsilon{{\bm{\epsilon}}}

\def\vu{{\bm{u}}}
\def\vv{{\bm{v}}}

\def\vx{{\bm{x}}}
\def\vy{{\bm{y}}}
\def\vz{{\bm{z}}}

\def\vU{{\bm{U}}}

\DeclareMathAlphabet{\mathsfit}{\encodingdefault}{\sfdefault}{m}{sl}
\SetMathAlphabet{\mathsfit}{bold}{\encodingdefault}{\sfdefault}{bx}{n}

\newcommand{\R}{\mathbb{R}}

\def\R{{\mathbb{R}}}

\DeclareMathOperator*{\argmin}{arg\,min}

\definecolor{pearDark}{HTML}{2980B9}

\colorlet{shadecolor}{gray!20}

\newcommand{\vtheta}{\mathbf{\theta}}

\usepackage[textsize=tiny]{todonotes}

\icmltitlerunning{Solving Inverse Problems with Flow-based Models via Model Predictive Control}

\begin{document}
	
	\twocolumn[
	\icmltitle{Solving Inverse Problems with Flow-based Models via Model Predictive Control
	}
	
	
	
	\icmlsetsymbol{equal}{*}
	
	\begin{icmlauthorlist}
		\icmlauthor{George Webber}{equal,king}
		\icmlauthor{Alexander Denker\textsuperscript{$\dagger$}}{equal,desy}
		\icmlauthor{Riccardo Barbano}{ucl}
		\icmlauthor{Andrew J Reader}{king}
	\end{icmlauthorlist}
	
	\icmlaffiliation{king}{School of Biomedical Engineering and Imaging Sciences, King’s College London}
	\icmlaffiliation{ucl}{Department of Computer Science, University College London}
	\icmlaffiliation{desy}{Helmholtz Imaging, Deutsches Elektronen-Synchrotron DESY, Germany}
	
	\icmlcorrespondingauthor{Alexander Denker}{alexander.denker@desy.de} 
	
	\icmlkeywords{Machine Learning, ICML}
	
	\vskip 0.3in
	]
	
	
	
	\printAffiliationsAndNotice{%
		\icmlEqualContribution
		\textsuperscript{$\dagger$}Work done while at University College London.
	}
	
	\begin{abstract}
		Flow-based generative models provide strong unconditional priors for inverse problems, but guiding their dynamics for conditional generation remains challenging.
		Recent work casts training-free conditional generation in flow models as an optimal control problem; however, solving the resulting trajectory optimisation is computationally and memory intensive, requiring differentiation through the flow dynamics or adjoint solves.
		We propose MPC-Flow, a model predictive control framework that formulates inverse problem solving with flow-based generative models as a sequence of control sub-problems, 
		enabling practical optimal control-based guidance at inference time.
		We provide theoretical analysis linking MPC-Flow to the underlying optimal control objective and show how different algorithmic choices yield a spectrum of guidance algorithms, including regimes that avoid backpropagation through the generative model trajectory.
		We evaluate MPC-Flow on benchmark image restoration tasks, spanning linear and non-linear settings such as in-painting, deblurring, and super-resolution, and demonstrate strong performance and scalability to massive state-of-the-art architectures via training-free guidance of FLUX.2 (32B) in a quantised setting on consumer hardware.
	\end{abstract}
	
	\section{Introduction}
	Continuous normalising flows \cite{chen2018neural}, particularly those trained via flow matching \cite{lipman2022flow}, have become an established and powerful framework for generating high-dimensional data \cite{esser2024scaling,wang2025simplefold}.
	These models define a continuous-time transformation through an ordinary differential equation (ODE) that maps samples from a simple base distribution to complex data distributions.
	Beyond unconditional generation, there is significant interest in leveraging the learned dynamics of these models to solve inverse problems.
	Here, the goal is to recover a signal $\vx \in \R^d$ from noisy measurements 
	\begin{equation}
		\label{eq:inv_def}
		\vy = \mathcal{A}(\vx) + \vepsilon,
	\end{equation}
	with $\mathcal{A}$ a (possibly non-linear) forward operator and $\vepsilon$ denoting additive Gaussian noise, i.e., \(\vepsilon \sim \mathcal{N}(0, \sigma^2 I)\).
	%
	
	Recent work has explored training-free guidance methods that adapt pre-trained flow models to conditioning constraints at inference time \cite{kim2025flowdps,pokle2023training,zhang2024flow}, see also Appendix~\ref{sec:related_work} for a discussion on related work.
	These approaches incorporate data fidelity terms or task-specific loss functions directly into the generative dynamics, steering the flow towards regions that satisfy the desired constraints without retraining the base model. 
	Such approaches are often motivated by the Bayesian perspective on inverse problems \cite{stuart2010inverse}, where the prior $p_\text{data}(\vx)$, given by the pre-trained flow model, is combined with likelihood $p(\vy|\vx)$.
	%
	However, these heuristic guidance strategies have limitations \cite{chung2022diffusion,patel2024steering,barbano2025steerable}. They often lack theoretical guarantees, e.g., regarding consistency to the measurements, and can be computationally unstable.
	
	As an alternative, \citet{liu2023flowgrad,wang2024training} recast conditional sampling in continuous-time flow models as a deterministic optimal control problem.
	While the optimal control perspective provides a principled objective and improved trade-offs between data fidelity and prior consistency, solving the resulting trajectory optimisation requires differentiating through the sampling dynamics, leading to prohibitive computational and memory costs, or costly evaluations of the adjoint equations.
	
	We propose MPC-Flow, a model predictive control (MPC) framework \cite{garcia1989model,rawlings2020model} that makes optimal control-based guidance practical at inference time.
	Rather than optimising the entire trajectory at once, 
	as in classical optimal control, MPC-Flow decomposes the control into a sequence of short-horizon sub-problems.
	At each time step, a local 
	control problem is solved, the resulting control is applied, and this process is repeated in a receding-horizon fashion.
	Our proposed framework allows for a control strategy that significantly reduces memory requirements by decoupling control optimisation from the full flow trajectory, while improving robustness through iterative re-planning that corrects errors accumulated at earlier time steps.
	Moreover, different choices of horizon length, control parameterisation, and terminal cost function give rise to a range of guidance algorithms within this unified framework, including an efficient   and principled variant that avoids backpropagation through the generative trajectory.
	
	
	We validate the proposed framework empirically across a range of controlled and large-scale settings.
	We first analyse MPC-Flow in controlled settings, including a two-dimensional toy example and a computed tomography task on the OrganCMNIST dataset, enabling detailed ablations and direct comparison with globally optimal control solutions. 
	We then evaluate MPC-Flow on a range of linear and non-linear image restoration inverse problems on the CelebA dataset, and finally demonstrate scalability to modern large-scale rectified flow models. 
	
	To the best of our knowledge, MPC-Flow is the first conditional generation method explicitly rooted in optimal control that scales to such settings, enabling training-free guidance under tight memory constraints and outperforming existing scalable approaches, including guidance of FLUX.2 (32B) \cite{flux-2-2025} in a quantised setting on consumer hardware.
	
	\section{Background}
	
	\subsection{Flow-based Models}
	
	Flow Matching (FM) \cite{lipman2022flow} is a framework for training continuous normalising flows using a simulation-free objective.
	The goal is to learn a time-dependent vector field $v_\vtheta : \mathbb{R}^d \times [0,1] \to \mathbb{R}^d$ that induces a probability path $p_t$ transforming a simple base distribution $p_0$ (e.g., a standard Gaussian) into a data distribution $p_{\text{data}} := p_1$.
	Sample generation is performed by solving the ODE
	\begin{align}
		\label{eq:flow_ode}
		\frac{d \vx(t)}{dt} = v_\vtheta(\vx(t), t), \qquad \vx(0) \sim p_0 .
	\end{align}
	
	Training proceeds by regressing $v_\vtheta$ toward a ground-truth velocity field $u_t$ that generates the desired probability path $p_t$, by minimising the flow matching objective
	\[
	\mathcal{L}_{\text{FM}}(\vtheta) = \mathbb{E}_{t,\, \vx \sim p_t} \bigl\| v_\vtheta(\vx, t) - u_t(\vx) \bigr\|^2 .
	\]
	However, the marginal velocity field $u_t$ is generally intractable to compute, as it depends on the unknown intermediate density $p_t$ and its score for complex high-dimensional data distributions.
	To address this intractability, FM adopts a conditional formulation.
	Conditional Flow Matching trains the model to match a \emph{conditional} vector field
	$u_t(\vx \mid \vx_0, \vx_1)$, with $(\vx_0,\vx_1)\sim p_0 p_1$, by minimising
	\begin{align}\label{eq:cfm_loss}
		\mathcal{L}_{\text{CFM}}(\vtheta)
		= \mathbb{E}_{t,\, \vx_0,\, \vx_1,\, \vx}
		\bigl\| v_\vtheta(\vx, t) - u_t(\vx | \vx_0, \vx_1) \bigr\|^2 ,
	\end{align}
	Crucially, it has been shown that $\nabla_\vtheta \mathcal{L}_{\text{CFM}}(\vtheta) = \nabla_\vtheta \mathcal{L}_{\text{FM}}(\vtheta)$ \cite{lipman2022flow,tong2023improving}.
	In practice, the conditional probability path is often defined via an affine interpolation between a source sample $\vx_0 \sim p_0$ and a target data point $\vx_1 \sim p_1$.
	A common choice is the linear interpolation used in rectified flows \cite{liu2022rectified},
	$\vx_t = (1-t)\vx_0 + t \vx_1$.
	This induces a simple conditional flow with velocity
	$u_t(\vx | \vx_0, \vx_1) = \vx_1 - \vx_0$.
	Substituting this into \eqref{eq:cfm_loss} yields the simplified training objective
	\begin{align}
		\label{eq:flow_matching_affine}
		\mathcal{L}(\vtheta)
		= \mathbb{E}_{t,\, \vx_0,\, \vx_1}
		\bigl\| v_\vtheta(\vx_t, t) - (\vx_1 - \vx_0) \bigr\|^2 .
	\end{align}
	At inference time, the learned vector field $v_\vtheta$ is integrated according to \eqref{eq:flow_ode} to generate samples from the model.
	
	\begin{figure*}[t]
		\includegraphics[width=1.0\linewidth]{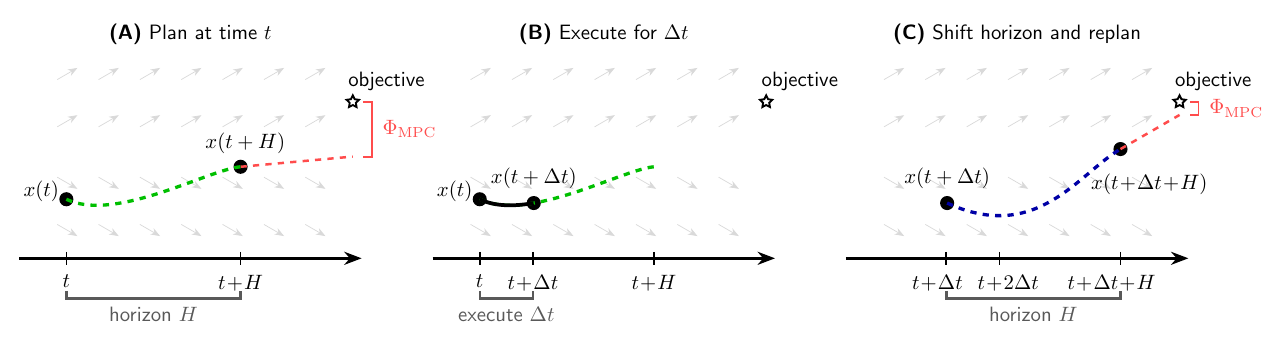}
		\vspace{-0.8cm}
		\caption{MPC-Flow strategy for guiding flow-based generative models toward a target objective. Starting from the current state, MPC-Flow plans a sequence of velocity adjustments that steer the flow toward the objective while keeping the intervention small (A). Only the initial part of the plan is applied (B), after which we re-plan from the new state and repeat the process (C).
		}\label{fig:mpc_overview}
		\vspace{-0.2cm}
	\end{figure*}
	\subsection{Optimal Control in Flow Models}
	
	\citet{liu2023flowgrad,wang2024training} frame conditional generation with pre-trained flow models as a deterministic optimal control problem, formalised as
	\begin{equation}
		\label{eq:flow_optimal_control}
		\begin{split}
			&\min_{\vu} \left\{ \mathcal{J}(\vu) = \int_0^1 \| \vu(t) \|_2^2 \, dt + \lambda \, \Phi(\vx(1)) \right\} \\ 
			\text{s.t.}\;& \quad \frac{d\vx(t)}{dt} = v_\vtheta(\vx(t), t) + \vu(t), \quad t \in [0,1].
		\end{split}
	\end{equation}
	The trajectories are initialised at \(\vx(0)=\vx_0\), sampled from the base distribution $p_0$.
	Here, \(v_\vtheta(\vx,t)\) denotes the pre-trained flow model, while the integral term regularises the control energy, encouraging trajectories to remain close to the nominal flow dynamics.
	The terminal loss \(\Phi(\vx)\) encodes the conditioning objective. 
	This optimisation problem seeks a control $\vu:[0,1] \to \R^d$ that steers the trajectory to minimise this terminal loss while maintaining minimal deviation from the pre-trained base model.
	
	Under some standard conditions on the flow field $v_\vtheta$ and the terminal cost $\Phi$, we can prove the existence of an optimal control.
	The proof follows standard arguments and is provided in Appendix \ref{app:existence}.

	\begin{remark}
		For inverse problems, the terminal loss can be chosen as
		\[
		\Phi(\vx) = \frac{1}{2\sigma^2} \| \mathcal{A}(\vx) - \vy \|_2^2,
		\]
		which corresponds to assuming additive Gaussian noise with variance \(\sigma^2\) in the likelihood model; see \eqref{eq:inv_def}.
		Under this choice, the resulting optimal control formulation admits an interpretation within the framework of variational regularisation \cite{EnglHankeNeubauer1996}. 
		In particular, the induced regulariser is given by a distance induced by the flow model, measuring how costly it is to reach $\vx$ from the initial point~$\vx_0$.
		We refer to \cref{app:regularisation} for a detailed discussion.
	\end{remark}
	
	In practice, the continuous-time optimal control problem in \eqref{eq:flow_optimal_control} is discretised.
	Both approaches proposed by \citet{liu2023flowgrad} and \citet{wang2024training}, namely FlowGrad and OC-Flow, employ an explicit Euler discretisation with \(N\) uniform time steps of size \(\Delta t = 1/N\), leading to the following discretised problem:
	\begin{equation}
		\label{eq:flow_optimal_control_disc}
		\begin{split}
			&\min_{\vu_0,\dots, \vu_{N-1}} \frac{1}{N}\sum_{k=0}^{N-1} \| \vu_i \|_2^2 + \lambda \Phi(\vx_N) \\ 
			\text{s.t.}& \quad \vx_{t_{k+1}} = \vx_{t_k} + \frac{1}{N}(v_\vtheta(\vx_{t_k},t_k) + \vu_k) , \\ &\quad k=1,\dots, N-1,\quad t_k = k / N .
		\end{split}
	\end{equation}
	This discrete problem can, in principle, be solved using gradient descent, where the gradients for the controls $\vU = [\vu_0,\dots, \vu_{N-1}]$ can be computed using automatic differentiation by backpropagating through the ODE solver, or using the adjoint method \cite{chen2018neural}, which requires simulation of two ODEs (the forward and the adjoint) for every gradient computation, increasing the computational cost (i.e., runtime).
	Automatic differentiation incurs a high memory usage that scales linearly with the number of time steps, since computing the gradient with respect to \(\vu_k\) requires \(N-k\) nested evaluations of the flow model \(v_\vtheta\), making it impractical for large-scale models. 
	Check-pointing \cite{chen2016training} is often employed to reduce memory usage, at the expense of increased computation time. 

	\section{MPC-Flow}
	
	In this work, we propose MPC-Flow, a framework that leverages MPC \cite{garcia1989model,rawlings2020model} for conditional generation with pre-trained flow models and addresses the computational bottlenecks identified in the previous paragraph.
	
	MPC is an optimal control framework that has become standard in robotics and process control \cite{qin2003survey}.
	At each time step, MPC computes a control by solving a finite-horizon optimal control problem initialised at the current state.
	Specifically, at time $t$, MPC considers a planning horizon $[t, t+H]$ and computes an optimal control \(\vu(\tau)\) over this interval. 
	Rather than executing the entire trajectory, only the first part, corresponding to a short interval $\Delta t < H$, is applied to the system.
	The state is then updated, the horizon is shifted forward, and the optimisation is repeated. 
	In this work, the MPC problem solved at time $t$ takes the form
		\[
		\begin{split}
			&\min_{\vu} \int_{t}^{t+H} \!\! \| \vu(\tau) \|^2 d\tau\!+\!\lambda \Phi_\text{MPC}(\vx(t\!+\!H), t+H) \\ 
			&\text{s.t.}\! \quad \frac{d \vx(\tau)}{d\tau} =  v_\vtheta(\vx(\tau),\tau) + \vu(\tau), \!\!\quad \vx(t) = \hat{\vx}_{t},
		\end{split}    
	\]
	where $\hat{\vx}_t$ is the current state at time $t$.
	MPC is defined by a choice of the horizon \(H\), the number of discretisation steps $K$ used to partition the horizon, and an intermediate cost function \(\Phi_{\mathrm{MPC}} : \mathbb{R}^d \times [0,1] \to \mathbb{R}_{\ge 0}\).
	The overall approach is illustrated in \cref{fig:mpc_overview}.
	
	In particular, we consider two distinct regimes: (i) receding-horizon control with \(H = 1 - t\); and (ii) \(\Delta t\)-horizon control with \(H = \Delta t\).
	These choices correspond to two extreme cases.
	For \(H = 1 - t\), the control is optimised over the full remaining trajectory, whereas for \(H = \Delta t\) the optimisation is restricted to the next intermediate time step.
	In the following paragraphs, we analyse these two design choices and summarise their key differences in \cref{tab:mpc_comparison}.

	\begin{figure*}[t]
		\centering
		\begin{minipage}[t]{0.49\textwidth}
			\small
			\vspace{0pt}
			\begin{algorithm}[H]
				\caption{Receding-Horizon Control (RHC)}
				\label{alg:flow_mpc_rhc}
				\begin{algorithmic}[1]
					\INPUT Pretrained flow model $v_\vtheta(\vx,t)$, terminal cost $\Phi(\vx)$,
					initial state $\vx_0$, number of coarse steps $K$, step size $\Delta t$
					\STATE $t' \gets 0$, \quad $\vx \gets \vx_0$
					\WHILE{$t' < 1$}
					\STATE Set $t_k \gets t' + k(1-t')/K$ \textcolor{pastelblue}{\small \# discretise $[t',1]$}
					\STATE Initialise control sequence $\vU = [\vu_0, \dots, \vu_{K-1}]$
					\STATE \textbf{Solve the \textcolor{pastelblue}{finite-horizon} optimal control:}\\
					\hspace{1.5em}$\displaystyle \vU^* = \arg\min_\vU \left[ \sum_{k=0}^{K-1} \tfrac{1-t'}{K} \|\vu_k\|^2 + \lambda \Phi(\vx_{t_K}) \right]$\\
					\hspace{1.5em}subject to $\vx_{t_{k+1}} = \vx_{t_k} + \tfrac{1-t'}{K}\big(v_\vtheta(\vx_{t_k}, t_k) + \vu_k\big)$
					\STATE Apply only $\vu_0^*$ on $[t', t'+\Delta t]$
					\STATE $\vx \gets \vx + \Delta t\big(v_\vtheta(\vx, t') + \vu_0^*\big)$
					\STATE $t' \gets t' + \Delta t$
					\ENDWHILE
				\end{algorithmic}
			\end{algorithm}
		\end{minipage}%
		\hfill
		\begin{minipage}[t]{0.49\textwidth}
			\small
			\vspace{0pt}
			\begin{algorithm}[H]
				\caption{$\Delta t$-Horizon Control}
				\label{alg:flow_mpc_deltat}
				\begin{algorithmic}[1]
					\INPUT Pretrained flow model $v_\vtheta(\vx,t)$, MPC cost $\Phi_\text{MPC}(\vx)$,
					initial state $\vx_0$, step size $\Delta t$
					\STATE $t' \gets 0$, \quad $\vx \gets \vx_0$
					\STATE $\Phi_\text{MPC}(\vx, t) := \Phi\!\left(\vx + (1-t)\,v_\vtheta(\vx, t)\right)$
					\WHILE{$t' < 1$}
					\STATE Initialise control $\vu$
					\STATE \textbf{Solve the \textcolor{pastelblue}{one-step} optimal control:}\\
					\hspace{1.5em}$\displaystyle \vu^* = \arg\min_\vu \left[ \|\vu\|^2 + \lambda \Phi_\text{MPC}(\vx_{t'+\Delta t},\, t'+\Delta t) \right] \phantom{\left[ \sum_{k=0}^{K-1} \tfrac{1-t'}{K} \|\vu_k\|^2 + \lambda \Phi(\vx_{t_K}) \right]}$\\
					\hspace{1.5em}with $\vx_{t'+\Delta t} = \vx_{t'} + \Delta t\big(v_\vtheta(\vx_{t'}, t') + \vu\big) \phantom{\tfrac{1-t'}{K}}$
					\STATE Apply $\vu^*$ on $[t', t'+\Delta t]$
					\STATE $\vx \gets \vx + \Delta t\big(v_\vtheta(\vx, t') + \vu^*\big)$
					\STATE $t' \gets t' + \Delta t$
					\ENDWHILE
				\end{algorithmic}
			\end{algorithm}
		\end{minipage}
		\caption{RHC optimises over the full horizon $[t, 1]$, discretised into $K$ steps; \(\Delta t\)-Horizon uses a 1-step lookahead horizon $[t, t+\Delta t]$. For MPC--RHC, the original terminal loss \(\Phi\) is used, whereas MPC--\(\Delta t\) employs the projected loss \(\Phi_{\mathrm{MPC}}\).}
	\end{figure*}
	
	\subsection{Receding-Horizon (RHC)}
	\label{sec:mpc_rhc}
	We first analyse the time horizon $H=1-t$, namely receding-horizon control (MPC-RHC).
	MPC-RHC re-plans the entire remaining trajectory at every step,
	\begin{equation}
		\label{eq:mpc_receeding_horizon}
		\begin{aligned}
			\vu^* \in &\arg\min_{\vu}\;
			\int_{t}^{1} \| \vu(\tau) \|^2 \, d\tau
			+ \lambda \, \Phi(\vx(1)) \\
			&\text{s.t.}\;
			\frac{d\vx(\tau)}{d\tau}
			= v_\vtheta(\vx(\tau), \tau) + \vu(\tau).
		\end{aligned}
	\end{equation}
	The optimisation is carried out over \(\tau \in [t,1]\), with the state initialised as \(\vx(t) = \vx_t\).
	The control is optimised over the remaining time horizon $[t,1]$, but is only applied over a short time horizon $[t, t+ \Delta t]$ and the optimisation problem is re-solved from the updated state.
	As we solve the control problem up to the terminal time $t=1$, the intermediate loss function is simply chosen to be $\Phi_\text{MPC}(\vx, 1) := \Phi(\vx)$.
	
	To analyse the theoretical properties of this scheme, we make use of the value function $V(t, \vx)$, defined as the minimum cost-to-go from state $\vx$ at time $t$,
	\[
	V(t, \vx) := \inf_{\vu_{[t,1]}} \left\{ \int_t^1\| \vu(\tau) \|^2 d\tau + \lambda \Phi(\vx(1)) \right\},
	\]
	where the infinum is taken with respect to the restricted interval \([t, 1]\).
	We can now formally state that the MPC-RHC recovers the global optimal control from \eqref{eq:flow_optimal_control}. 
	
	\begin{theorem}[Optimality of Receding-Horizon Control]\label{thm:rhc_optimality}
		Let $\vu^*$ be the global optimal control for the problem \eqref{eq:flow_optimal_control}.
		If at every time $t$, the sub-problem \eqref{eq:mpc_receeding_horizon} with $H=1-t$ and $\Phi_\text{MPC}(\cdot, 1)=\Phi$ is solved to optimality, then the sequence of applied controls coincides with $\vu^*$.
	\end{theorem}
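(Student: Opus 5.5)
The argument is Bellman's principle of optimality: tails of optimal trajectories are optimal for the tail problems. Write $\vx^*$ for the trajectory generated by $\vu^*$ from $\vx(0)=\vx_0$. The claim to establish is that for every $t\in[0,1)$, the restriction $\vu^*|_{[t,1]}$ is an optimal control for the sub-problem \eqref{eq:mpc_receeding_horizon} started at $\vx(t)=\vx^*(t)$. Equivalently,
\[
V(t,\vx^*(t)) = \int_t^1 \|\vu^*(\tau)\|^2\,d\tau + \lambda\,\Phi(\vx^*(1)).
\]

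\textbf{Step 1 (tail optimality).} I would argue by contradiction. Suppose some $\tilde\vu$ on $[t,1]$ starting from $\vx^*(t)$ achieves strictly smaller cost-to-go. Define the concatenation $\hat\vu = \vu^*$ on $[0,t)$ and $\hat\vu=\tilde\vu$ on $[t,1]$. The state equation has a unique solution under the standing assumptions on $v_\vtheta$ from Appendix~\ref{app:existence}, e.g.\ Lipschitz in $\vx$. So the trajectory of $\hat\vu$ agrees with $\vx^*$ on $[0,t]$ and then follows $\tilde\vu$'s trajectory. Since the running cost is additive over $[0,t]\cup[t,1]$, we get $\mathcal{J}(\hat\vu)<\mathcal{J}(\vu^*)$, contradicting global optimality.

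\textbf{Step 2 (induction over MPC steps).} At $t'=0$ the sub-problem is exactly \eqref{eq:flow_optimal_control}, so its solution may be taken to be $\vu^*$. The applied control on $[0,\Delta t]$ is then $\vu^*|_{[0,\Delta t]}$ and the state reaches $\vx^*(\Delta t)$. Inductively, suppose the MPC state at $t'$ equals $\vx^*(t')$. By Step 1, $\vu^*|_{[t',1]}$ solves the sub-problem there. Applying its first piece moves the state to $\vx^*(t'+\Delta t)$. Concatenating the applied pieces gives exactly $\vu^*$.

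\textbf{Main obstacle: uniqueness.} If the sub-problem has several minimisers, the solver might return a different optimal tail $\tilde\vu$. Then the applied controls need not coincide pointwise with $\vu^*$. I would handle this in one of two ways.

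One option is to assume uniqueness of the optimal control, or more precisely uniqueness of minimisers of each tail problem along the optimal trajectory. Under that assumption Step 1 forces the sub-problem solution to equal $\vu^*|_{[t',1]}$.

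The other option is to state the weaker but unconditional conclusion and reinterpret $\vu^*$ as \emph{a} global optimal control. At each step, concatenating the previously applied controls with the new optimal tail yields a control that Step 1's additivity argument shows is globally optimal. This works because its cost equals $\int_0^{t'}\|\vu\|^2 + V(t',\vx(t'))$, and by the induction hypothesis that is $V(0,\vx_0)$. The limit of this process is then a globally optimal control. So the MPC-applied sequence is a globally optimal control, and it coincides with the given $\vu^*$ whenever the optimum is unique.

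Measurability of concatenated controls in $L^2$ is routine.
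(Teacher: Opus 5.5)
Your proposal is correct and follows the same route as the paper's proof. Both use Bellman's principle of optimality: split the cost at $t$ and note that, since the cost on $[0,t]$ is fixed, the restriction $\vu^*|_{[t,1]}$ must solve the sub-problem started from $\vx^*(t)$. Your explicit concatenation-by-contradiction argument, your induction showing the MPC state stays on $\vx^*$, and your observation that literal coincidence with $\vu^*$ requires uniqueness of the tail minimisers (otherwise one only obtains \emph{a} globally optimal control) are refinements that the paper's short proof leaves implicit.
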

	
	The proof follows directly from Bellman's principle of optimality, see e.g.~\citet{liberzon2011calculus}, and is provided in Appendix \ref{app:proof_rhc} for reference. We validated this finding empirically with a 2D toy example, see Appendix \ref{app:hexagon}.
	
	While Theorem~\ref{thm:rhc_optimality} guarantees optimality, solving the MPC sub-problem over $[t,1]$ remains, in principle, as expensive as solving the original problem \eqref{eq:flow_optimal_control}. 
	However, because we re-plan after every time step, MPC-RHC gains tractability through a coarse discretisation of the remaining horizon. 
	At time $t$, we discretise $[t,1]$ into $K$ steps with grid points $t_k = t + k(1-t)/K$ and approximate the controlled dynamics using an explicit Euler scheme,
		\[
		\vx_{t_{k+1}} = \vx_{t_k} + \frac{1-t}{K}\left[ v_\vtheta(\vx_{t_k}, t_k) + \vu(t_k) \right].
	\]
	This recursion is applied for $k=0,\dots,K-1$, starting from $\vx_{t_0}=\vx_t$.
	As $t$ increases, the remaining horizon shortens and the effective step size decreases, naturally yielding finer temporal resolution near the terminal time while allowing coarser discretisation earlier in the trajectory.
	The method is summarised in \cref{alg:flow_mpc_rhc}; however, for \(K > 1\), this variant may be unsuitable for real-time deployment, as solving the resulting sub-problem still requires backpropagation through \(K\!-\!1\) nested evaluations of the flow model \(v_\vtheta\).

	\begin{table}[t]
		\centering
		\caption{Comparison of MPC-Flow design choices.}
		\label{tab:mpc_comparison}
		\resizebox{0.45\textwidth}{!}{
			\begin{tabular}{lccc}
				\toprule
				Method &
				Horizon \(H\) &
				Discretisation &
				Backprop.\ through \(v_\vtheta\) \\
				\midrule
				\multicolumn{4}{l}{\textbf{Receding-Horizon Control (RHC)} - \cref{alg:flow_mpc_rhc},\ref{alg:flow_mpc_ss}} \\
				\cmidrule(lr){1-4}
				\(K > 1\) &
				\(1 - t\) &
				\(K\) steps &
				Yes (\(\times K\)) \\
				\(K = 1\) &
				\(1 - t\) &
				1 step &
				No \\
				\midrule
				\multicolumn{4}{l}{\textbf{\(\Delta t\)-Horizon Control} - \cref{alg:flow_mpc_deltat}} \\
				\cmidrule(lr){1-4}
				MPC--\(\Delta t\) &
				\(\Delta t\) &
				1 step &
				Yes (\(\times 1\)) \\
				\bottomrule
			\end{tabular}
		}
		\vspace{-0.3cm}
	\end{table}
	
	\paragraph{Single-Step RHC (K=1).}
	A special case is to set $K\!=\!1$, i.e., solving the ODE on the remaining interval $[t,1]$ using a single time step.
	For this choice the problem reduces to
	\[
		\begin{split}
			\min_{\vu\in \R^{d}} \; &\left\{ \mathcal{J}_{K=1}(\vu) = (1-t') \| \vu \|^2 + \lambda \Phi(\vx') \right\}, \\
			\text{with } &\vx' = \vx + (1-t') \big(v_\vtheta(\vx, t') + \vu\big),
		\end{split}
	\]
	This approximation effectively linearises the flow path.
	Crucially, the gradient $\nabla_\vu \mathcal{J}_{K=1}(\vu)$ does not require backpropagating through the flow model $v_\vtheta$.
	%
	This dramatically reduces GPU memory requirements and computational cost, making the method particularly suitable for scaling to large state-of-the-art architectures.
	This special case of Algorithm \ref{alg:flow_mpc_rhc} is given in the Appendix, see Algorithm \ref{alg:flow_mpc_ss}.
	
	\subsection{$\Delta t$-Horizon Control}
	\label{sec:delta_t}
	Alternatively, we consider the greedy limit where the planning horizon is minimal, i.e., $H = \Delta t$.
	In this regime, we only optimise the control for the next immediate time step,
	\begin{equation}
		\label{eq:mpc_deltat_horizon}
		\begin{aligned}
			\vu^* \in \arg\min_{\vu}\;
			& \int_{t}^{t+\Delta t} \| \vu(\tau) \|^2 \, d\tau \\
			& + \lambda \, \Phi_{\mathrm{MPC}}\!\left(\vx(t+\Delta t),\, t+\Delta t\right) \\
			\text{s.t.}\;
			& \frac{d\vx(\tau)}{d\tau}
			= v_\vtheta(\vx(\tau), \tau) + \vu(\tau).
		\end{aligned}
	\end{equation}
	The optimisation is performed over \(\tau \in [t, t+\Delta t]\), with the state initialised at \(\vx(t) = \vx_t\).
	This version of the MPC-Flow framework is described in Algorithm \ref{alg:flow_mpc_deltat}.
	Unlike for MPC-RHC, the choice of intermediate cost $\Phi_\text{MPC}$ is non-trivial here, as we are only solving the controlled ODE up to $t+\Delta t$. 
	Simply using $\Phi(\vx)$ would result in a control that would fail to look ahead.
	We show that the optimal choice for the intermediate loss is the value function itself. 
	
	\begin{theorem}[$\Delta t$-Horizon optimality]
		\label{thm:hjb_optimality}
		The $\Delta t$-horizon control defined in \eqref{eq:mpc_deltat_horizon} yields the globally optimal control policy if the terminal cost is chosen as the cost-to-go, i.e., $\Phi_\text{MPC}(\vx, t+\Delta t) = V(t+\Delta t, \vx)$.
	\end{theorem}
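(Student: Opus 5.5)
The argument is the dynamic programming principle, i.e.\ the same Bellman argument underlying \cref{thm:rhc_optimality}, applied on the short window $[t,t+\Delta t]$. First I will prove the identity
\[
V(t,\vx) = \inf_{\vu_{[t,t+\Delta t]}} \left\{ \int_t^{t+\Delta t} \|\vu(\tau)\|^2 d\tau + V\big(t+\Delta t, \vx(t+\Delta t)\big) \right\},
\]
where $\vx(\cdot)$ solves the controlled ODE on $[t,t+\Delta t]$ from $\vx(t)=\vx$. The inequality ``$\le$'' follows by concatenation: take any control on $[t,t+\Delta t]$, and append an $\varepsilon$-optimal control for the tail problem started at $\vx(t+\Delta t)$; then let $\varepsilon\to 0$. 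The inequality ``$\ge$'' follows by restriction: for any control on $[t,1]$, split the energy integral at $t+\Delta t$. The tail cost is at least $V(t+\Delta t,\vx(t+\Delta t))$ by definition of $V$. Existence and uniqueness of trajectories under the standard Lipschitz assumptions on $v_\vtheta$ (from the existence appendix) make the concatenated control admissible and the state well defined.

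Next I will identify the objective of \eqref{eq:mpc_deltat_horizon} with the bracket above. The only bookkeeping point is the factor $\lambda$ multiplying $\Phi_{\mathrm{MPC}}$: since $V$ already contains $\lambda\Phi$, I interpret the choice $\Phi_{\mathrm{MPC}}=V$ as absorbing $\lambda$, or equivalently set $\lambda\Phi_{\mathrm{MPC}} = V$. Under this identification, the sub-problem's optimal value equals $V(t,\hat\vx_t)$. Any minimiser $\vu^*_{[t,t+\Delta t]}$ then satisfies
\[
\int_t^{t+\Delta t}\|\vu^*\|^2 + V(t+\Delta t,\vx^*(t+\Delta t)) = V(t,\hat\vx_t).
\]

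Then I will chain the windows. Starting at $t=0$ from $\vx_0$, I apply the $\Delta t$-optimal control and move to the state $\vx^*(\Delta t)$, then repeat. Summing the equalities over the $N=1/\Delta t$ windows telescopes the $V$ terms, since $V(1,\vx)=\lambda\Phi(\vx)$. The concatenated control therefore has total cost $\mathcal{J} = V(0,\vx_0)$, so it is globally optimal for \eqref{eq:flow_optimal_control}.

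Conversely, the restriction of any global optimum to each window solves the corresponding sub-problem. This shows the policy coincides with $\vu^*$ whenever the optimum is unique, and otherwise produces \emph{an} optimal control, which is what ``globally optimal policy'' should mean.

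The main obstacle is the rigorous dynamic programming principle, specifically the concatenation step. It needs measurability and admissibility of glued controls, for example in $L^2$, and it needs well-posedness of the ODE. I will handle it with $\varepsilon$-optimal controls rather than assuming minimisers exist on every tail. I will also rely on the existence result from the appendix to ensure the infima are attained when claiming the applied controls are actually optimal.
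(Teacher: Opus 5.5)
Your proposal is correct and follows the same dynamic-programming route as the paper. The paper's proof simply writes $V(t,\vx(t)) = \min_{\vu}\{\int_t^{t+\Delta t}\|\vu(\tau)\|^2 d\tau + V(t+\Delta t,\vx(t+\Delta t))\}$ and observes that this is the $\Delta t$-sub-problem with $\Phi_{\mathrm{MPC}} = V$, whereas you also prove that identity by concatenation and restriction, telescope the per-window equalities to get $\mathcal{J} = V(0,\vx_0)$, and rightly note that the factor $\lambda$ forces the identification $\lambda\Phi_{\mathrm{MPC}} = V$, a point the paper's proof passes over.
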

	The proof is again a consequence of the principle of optimality and provided in Appendix \ref{app:proof_deltat}. Since the true value function $V$ is intractable, practical implementations of $\Delta t$-horizon control must rely on approximations. 
	As a general-purpose heuristic, we can estimate the value function using a single Euler step approximation
	\begin{align}
		\label{eq:value_fun_approx}
		V(t, \vx) \approx \Phi(\vx + (1-t)v_\vtheta(\vx, t)).
	\end{align}
	This approximates the remaining trajectory as a straight line without control. For an affine linear probability path, the single-step Euler prediction coincides with the Tweedie estimator used in diffusion models, see Proposition 1 of \citet{kim2025flowdps}. 
	In diffusion-based settings, approximations of the value function such as \eqref{eq:value_fun_approx} are commonly used. For instance, \citet{li2024derivative} and \citet{uehara2025inference} employ such approximations within sequential Monte Carlo sampling schemes \cite{doucet2001sequential}. 
	An alternative approach is to learn the value function explicitly, for example via soft Q-learning \cite{haarnoja2017reinforcement}. 
	However, this requires an additional training phase and is therefore unsuitable for zero-shot applications.
	
	\section{Experiments}
	In this section we perform experiments exploring the different design choices of MPC-Flow.\footnote{The code is publicly available at \url{https://github.com/alexdenker/MPCFlow}}.
	First, we tackle typical image restoration tasks, both with linear and non-linear degradation operators, on CelebA \cite{yang2015facial}. Then, we show that we can scale MPC-RHC with $K\!=\!1$ to the recently released FLUX.2 \cite{flux-2-2025} model and consider style transfer and image colouration.  
	
	\subsection{Computed Tomography}
	We train a flow-based model on the OrganCMNIST subset of MedMNIST \cite{medmnistv1,medmnistv2}, using the flow matching library \cite{lipman2024flow}. The flow model is implemented as a UNet with approx.~$8$ million parameter, which allows us to directly implement the global optimal control problem \eqref{eq:flow_optimal_control_disc} and compare the solution with MPC-RHC. 
	In Figure~\ref{fig:mpc_flow_comparison_OrganCMNIST} we compare the PSNR and SSIM of the final reconstruction for different choices of the discretisation $K$. 
	For an increasing number of steps $K$ we even get a higher PSNR and SSIM than the global control, which can be explained by the fact that the global solution is obtained via a non-convex optimisation problem and may converge to a suboptimal local minimum.
	We computed the global control using the same data-consistency strength $\lambda$ and used the Adam optimiser \cite{kingma2014adam} until convergence. Further details and ablations are provided in Appendix \ref{app:computed_tomography}. 
	
	\begin{figure}
		\centering
		\includegraphics[width=1.0\linewidth]{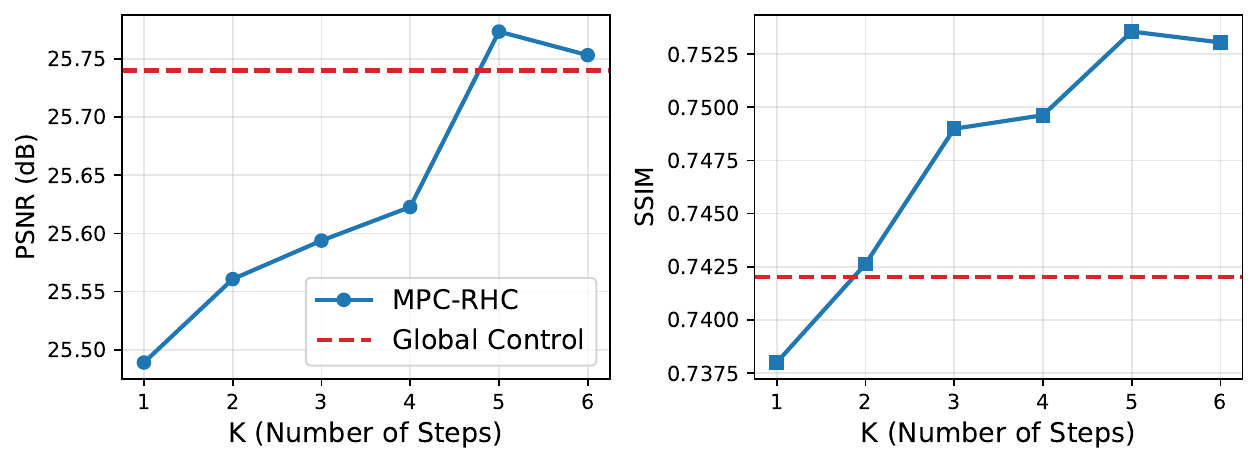}
		\vspace{-0.8cm}
		\caption{Comparison of MPC-RHC with varying $K$ compared to the global optimal control solution. All approaches use the same initial value $\vx_0$ and $\lambda=2500$.}
		\label{fig:mpc_flow_comparison_OrganCMNIST}
		\vspace{-0.4cm}
	\end{figure}

	\begin{figure*}[th]
		\centering
		
		\begin{subfigure}{.12\textwidth}
			\includegraphics[width=1.0\linewidth]{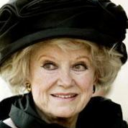}%
		\end{subfigure}%
		\hfill
		\begin{subfigure}{.12\textwidth}
			\includegraphics[width=1.0\linewidth]{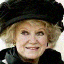}%
		\end{subfigure}%
		\hfill
		\begin{subfigure}{.12\textwidth}
			\includegraphics[width=1.0\linewidth]{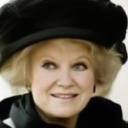}%
		\end{subfigure}%
		\hfill
		\begin{subfigure}{.12\textwidth}
			\includegraphics[width=1.0\linewidth]{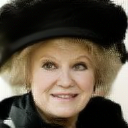}%
		\end{subfigure}%
		\hfill
		\begin{subfigure}{.12\textwidth}
			\includegraphics[width=1.0\linewidth]{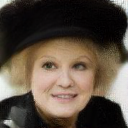}%
		\end{subfigure}%
		\hfill
		\begin{subfigure}{.12\textwidth}
			\includegraphics[width=1.0\linewidth]{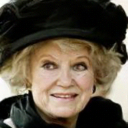}%
		\end{subfigure}%
		\hfill
		\begin{subfigure}{.12\textwidth}
			\includegraphics[width=1.0\linewidth]{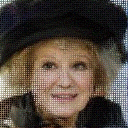}%
		\end{subfigure}%
		\hfill
		\begin{subfigure}{.12\textwidth}
			\includegraphics[width=1.0\linewidth]{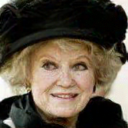}%
		\end{subfigure}%

		\begin{subfigure}{.12\textwidth}
			\includegraphics[width=1.0\linewidth]{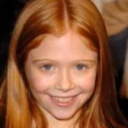}%
			\vspace{-5pt}
			\captionsetup{justification=centering}
			\caption*{\tiny Ground truth}
		\end{subfigure}%
		\hfill
		\begin{subfigure}{.12\textwidth}
			\includegraphics[width=1.0\linewidth]{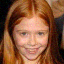}%
			\vspace{-5pt}
			\captionsetup{justification=centering}
			\caption*{\tiny Degraded}
		\end{subfigure}%
		\hfill
		\begin{subfigure}{.12\textwidth}
			\includegraphics[width=1.0\linewidth]{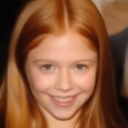}%
			\vspace{-5pt}
			\captionsetup{justification=centering}
			\caption*{\tiny PnP-Flow}
		\end{subfigure}%
		\hfill
		\begin{subfigure}{.12\textwidth}
			\includegraphics[width=1.0\linewidth]{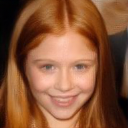}%
			\vspace{-5pt}
			\captionsetup{justification=centering}
			\caption*{\tiny FlowGrad}
		\end{subfigure}%
		\hfill
		\begin{subfigure}{.12\textwidth}
			\includegraphics[width=1.0\linewidth]{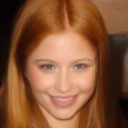}%
			\vspace{-5pt}
			\captionsetup{justification=centering}
			\caption*{\tiny OC-Flow}
		\end{subfigure}%
		\hfill
		\begin{subfigure}{.12\textwidth}
			\includegraphics[width=1.0\linewidth]{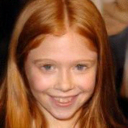}%
			\vspace{-5pt}
			\captionsetup{justification=centering}
			\caption*{\tiny MPC-$\Delta t$}
		\end{subfigure}%
		\hfill
		\begin{subfigure}{.12\textwidth}
			\includegraphics[width=1.0\linewidth]{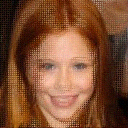}%
			\vspace{-5pt}
			\captionsetup{justification=centering}
			\caption*{\tiny MPC-RHC (K=1)}
		\end{subfigure}%
		\hfill
		\begin{subfigure}{.12\textwidth}
			\includegraphics[width=1.0\linewidth]{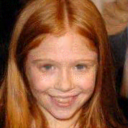}%
			\vspace{-5pt}
			\captionsetup{justification=centering}
			\caption*{\tiny MPC-RHC (K=3)}
		\end{subfigure}%
		\caption{Qualitative results on the CelebA dataset for the image super-resolution task with noise level $\sigma = 0.05$ and $\times 2$ upscaling.} \label{fig:celeba_images_main}
	\end{figure*}
	
	\begin{table*}[th]
		\centering
		\small
		\resizebox{\textwidth}{!}{ 
			\begin{tabular}{l cc cc cc cc cc cc c}
				\toprule
				\multirow{2}{*}{Method} &
				\multicolumn{2}{c}{\textbf{Denoising}} &
				\multicolumn{2}{c}{\textbf{Deblurring}} &
				\multicolumn{2}{c}{\textbf{Super-res.}} &
				\multicolumn{2}{c}{\textbf{Rand. inpaint.}} &
				\multicolumn{2}{c}{\textbf{Box inpaint.}}
				& \multicolumn{2}{c}{\textbf{Nonl. Debl.}} & \textbf{Time/img}
				\\
				& \multicolumn{2}{c}{$\sigma=0.2$}
				& \multicolumn{2}{c}{$\sigma=0.05,\ \sigma_b=1.0$}
				& \multicolumn{2}{c}{$\sigma=0.05,\ \times2$}
				& \multicolumn{2}{c}{$\sigma=0.01,\ 70\%$}
				& \multicolumn{2}{c}{$\sigma=0.05,\ 40\times40$}
				& \multicolumn{2}{c}{$\sigma=0.05$} & [s]   \\
				\cmidrule(lr){2-3}\cmidrule(lr){4-5}\cmidrule(lr){6-7}\cmidrule(lr){8-9}\cmidrule(lr){10-11}\cmidrule(lr){12-13}
				& \multicolumn{2}{c}{PSNR $\uparrow$ / SSIM $\uparrow$}
				& \multicolumn{2}{c}{PSNR / SSIM}
				& \multicolumn{2}{c}{PSNR / SSIM}
				& \multicolumn{2}{c}{PSNR / SSIM}
				& \multicolumn{2}{c}{PSNR / SSIM}
				& \multicolumn{2}{c}{PSNR / SSIM} \\
				\midrule
				Degraded      & 20.00 & 0.348 & 27.67 & 0.740 & 10.17 & -0.182 & 11.82 & -0.197 & 22.12 & 0.742 & 21.58 & 0.516 & -- \\
				\noalign{\vskip 0.25ex}
				\hdashline
				\noalign{\vskip 0.25ex}
				OT-ODE        & 30.50 & 0.867 & 32.63 & 0.915 & 31.05 & 0.902 & 28.36 & 0.865 & 28.84 & 0.914 & 21.95 & 0.626 &  $5.19$\\
				D-Flow        & 26.42 & 0.651 & 31.07 & 0.877 & 30.75 & 0.866 & 33.07 & 0.938 & 29.70 & 0.893 & \textbf{26.41} & \underline{0.682} & 26.31 \\
				Flow-Priors   & 29.26 & 0.766 & 31.40 & 0.856 & 28.35 & 0.717 & 32.33 & 0.945 & 29.40 & 0.858 & 21.84 & 0.490 & 24.30\\
				PnP-Flow      & \textbf{32.45} & \textbf{0.911} & \textbf{34.51} & \textbf{0.940} & \underline{31.49} & \underline{0.907} & \textbf{33.54} & \textbf{0.953} & \underline{30.59} & \textbf{0.943} & 22.19 & 0.643 & 5.97\\
				FlowChef & 21.57 & 0.403 & 26.02 & 0.596 & 27.68 & 0.729 & 28.46 & 0.848 & 26.41 & 0.776 & 17.59 & 0.279 & 1.50  \\
				FlowDPS & 31.45 & 0.900 & 31.84 & 0.906 & 26.01 & 0.786 & 26.42 & 0.798 & 28.73 & 0.884 & 23.79 & 0.707 & 3.76  \\ 
				\noalign{\vskip 0.25ex}
				\hdashline
				\noalign{\vskip 0.25ex}
				OC-Flow       & 19.39 & 0.559  & 27.81 & 0.828 & 24.95 & 0.746 & 16.77 & 0.456 & 26.97 & 0.856 & 20.17 & 0.528 & 234.76 \\
				FlowGrad      & 26.07  & 0.777  & 30.51  & 0.883 & 29.72  & 0.871 & 29.16 & 0.881  & 25.06  & 0.901 & 19.98 & 0.521 & 315.66 \\
				\noalign{\vskip 0.25ex}
				\hdashline
				\noalign{\vskip 0.25ex}
				MPC-\(\Delta t\)  & \underline{31.55} & \underline{0.877} & \underline{33.65} & \underline{0.928} & \textbf{31.89} & \textbf{0.911} & 32.76 & 0.942 & \textbf{30.68} & \underline{0.931} & \underline{25.44} & \textbf{0.718} & 89.25 \\
				MPC-RHC (K=1) & 27.30 & 0.708 & 32.47 & 0.900 & 18.02 & 0.421 & 18.31 & 0.455 & 26.13 & 0.877& 17.63 & 0.405 & 2.05 \\
				MPC-RHC (K=3) & 29.13 & 0.823 & 33.35 & 0.925 & 31.39 & 0.896  &  \underline{33.25} & \underline{0.951} & 30.03 & 0.929 & 21.71 & 0.596 & 175.20 \\
				\bottomrule
		\end{tabular}}
		\caption{
			Comparisons of state-of-the-art methods for several inverse problems on CelebA. 
			Results are averaged across 100 test images.
			Reconstruction time is measured for the \textbf{Denoising} task.
			We compare general flow-based inverse problem solvers (OT-ODE, D-Flow, Flow-Priors PnP-Flow, FlowChef and FlowDPS), optimal control-based methods (FlowGrad, OC-Flow) and our MPC-Flow variants.
		}
		\label{tab:celeb_linear}
	\end{table*}
	
	\subsection{Image Restoration}
	We evaluate MPC-Flow on standard image restoration tasks using the CelebA dataset.
	All experiments use a pre-trained flow-matching model from \citet{martin2024pnp}, trained with Mini-batch OT Flow \cite{tong2023improving} and a standard Gaussian latent prior.
	The flow model is implemented as a time-dependent U-Net \cite{ho2020denoising}.
	We primarily compare against optimal-control-based methods, specifically the global control approaches FlowGrad \cite{liu2023flowgrad} and OC-Flow \cite{wang2024training}.
	We also include recent flow-based solvers for inverse problems: D-Flow \cite{ben2024d}, OT-ODE \cite{pokle2023training}, Flow-Priors \cite{zhang2024flow}, PnP-Flow \cite{martin2024pnp}, FlowChef \cite{patel2025flowchef} and FlowDPS \cite{kim2025flowdps}.
	Results are in Table \ref{tab:celeb_linear} for linear image restoration tasks, including denoising, deblurring, super-resolution, random inpainting and box inpainting. 
	Further, we provide results for non-linear deblurring. 
	Qualitative results are in Figure \ref{fig:celeba_images_main}. Additional results are in Appendix \ref{app:image_restoration}.
	
	Across most linear restoration tasks, PnP-Flow achieves the highest PSNR and SSIM, with MPC-$\Delta t$ consistently ranking second.
	However, for super-resolution MPC-$\Delta t$ achieves the best performance on both PSNR and SSIM.
	For non-linear deblurring, PnP-Flow degrades substantially.
	In this setting, D-Flow yields the highest PSNR, while MPC-$\Delta t$ achieves the second-highest PSNR and the best SSIM. Across all tasks, MPC-Flow and its variants consistently outperform the global optimal control baselines FlowGrad and OC-Flow, while also achieving lower runtime.
	For completeness, FlowChef and FlowDPS were included in our evaluation, although both were originally designed for latent diffusion reconstruction tasks and lack explicit regularization strategies, resulting in suboptimal performance.
	In particular, FlowChef was primarily developed for image editing applications and is not well suited to high-noise image restoration or image inpainting tasks, particularly in settings involving a non-trivial null space, as highlighted in \cite{erbach2025solving}.
	The MPC-RHC variant with $K\!=\!1$ performs poorly for some forward operators; this behaviour is discussed in Appendix~\ref{app:image_restoration}. 
	This configuration corresponds to a direct implementation of Algorithm~\ref{alg:flow_mpc_ss} without additional numerical tricks or heuristics.
	For each method, we conducted a hyperparameter search on a subset of the CelebA validation dataset and reported the used hyperparams in \cref{app:image_restoration}.
	
	\begin{figure*}[t]
		\centering
		\includegraphics[width=\linewidth]{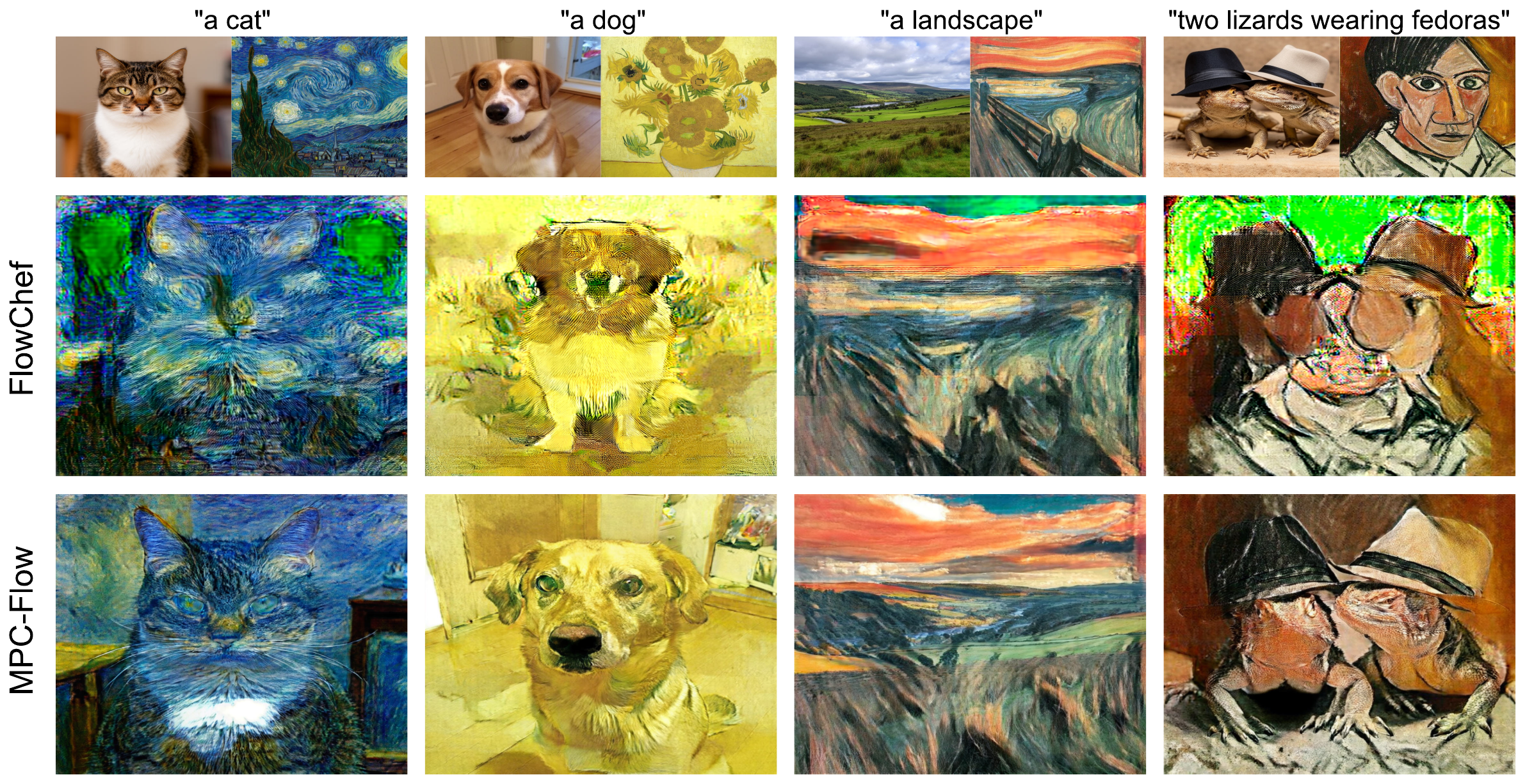}
		\caption{Example images generated by FLUX.2 with training-free style transfer guidance by FlowChef (middle row) and MPC (bottom row). The top row shows the image generated without conditioning as well as the prompt used and the reference style image.}
		\label{fig:style_transfer_example_grid}
	\end{figure*}
	
	\begin{figure*}
		\centering
		\includegraphics[width=\linewidth]{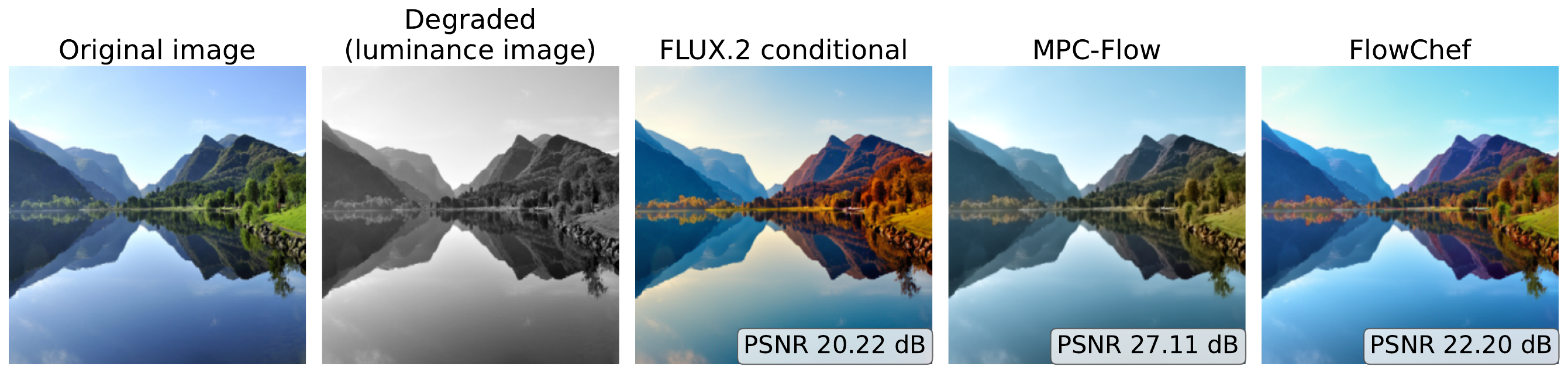}
		\caption{Example re-colouration images generated with FLUX.2. Prompt used: ``colourize this luminance image''.}
		\label{fig:recoloration}
	\end{figure*}

	\subsection{Large Vision Models}
	Many recent large vision models (LVMs), including Stable Diffusion 3.0 (SD.3.0) \cite{esser2024scaling} and FLUX.2 \cite{flux-2-2025}, are trained using flow-matching objectives and support both text and image conditioning.
	However, these models provide no guarantees for enforcing task-specific constraints.
	We apply our approach to two conditional generation tasks with FLUX.2: style transfer and image colouration. 
	%
	Due to the model's scale ($32$B parameters), we use a 4-bit quantised version to reduce memory usage.
	Although quantisation is generally suitable for the forward pass, it significantly reduces the accuracy of backpropagation.
	Our single-step approach, MPC-RHC ($K\!=\!1$), removes the need for backpropagation through the network, making it particularly suitable for LVM deployments.
	We compare our method to FlowChef \cite{patel2024steering}, which can be viewed as a reparametrisation of our method without control regularisation.
	All experiments are run on an NVIDIA RTX 3090 GPU with only 24 GB of memory.

	\begin{figure*}[t]
		\centering
		\includegraphics[width=1\linewidth]{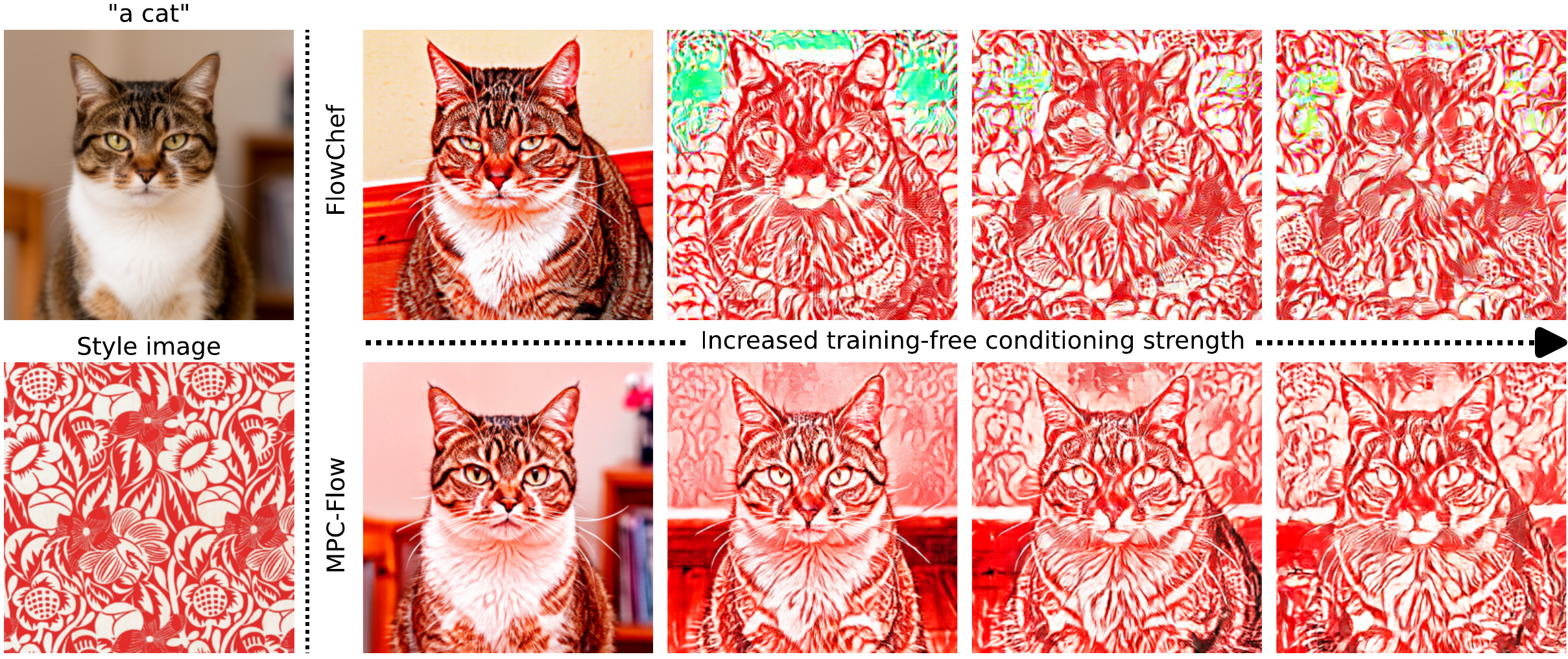}
		\caption{Column 1: (Top) An image generated by Flux.2. (Bottom) A reference style image. Remaining columns: Images generated using training-free style transfer with Flux.2 with FlowChef (top) and our method MPC-RHC (K=1) (bottom). From left to right, the methods use increased conditioning on the style reward; the path regularisation introduced by our method enables greater adherence to the original generation path (and hence closer alignment to the original image). For FlowChef, the learning rate of each style optimisation step was varied; for our method, we varied the amount of path regularisation (see Appendix \ref{app:flux}).}
		\label{fig:flux_comparisons}
	\end{figure*}

	\begin{figure}[t]
		\centering
		\includegraphics[width=1\linewidth]{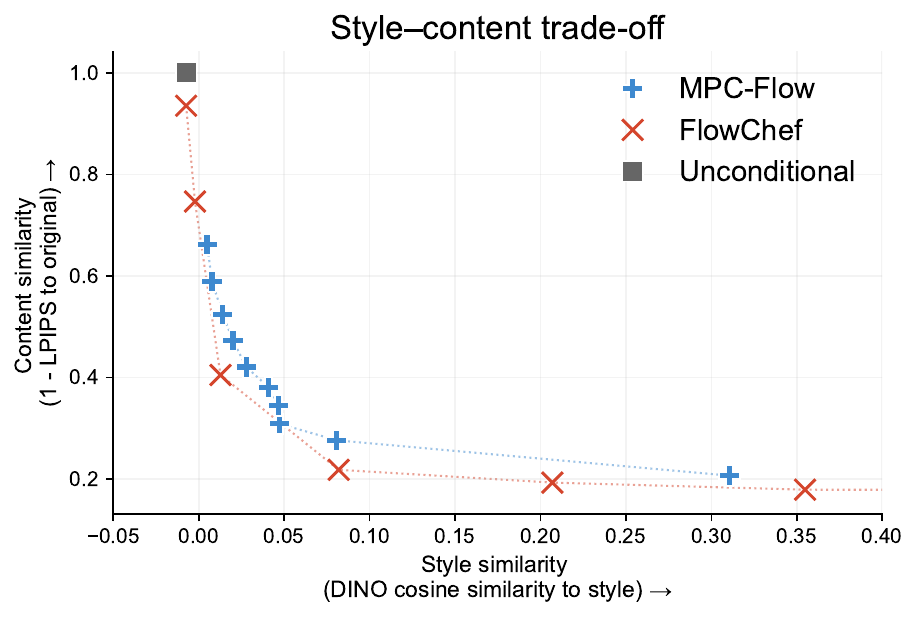}
		\caption{Style-content trade-off for MPC-Flow and FlowChef for style transfer with FLUX.2. Metrics recorded over 45 images (5 prompts $\times$ 9 style images).}
		\label{fig:style_content_tradeoff}
		\vspace{-0.3cm}
	\end{figure}

	\textbf{Style transfer} We implemented our approach for style transfer, with the terminal loss
	\[
	\Phi(\vz) \;=\; \bigl\| \mathrm{Style}(\text{Dec}(\vz)) - \mathrm{Style}(\vx_{\mathrm{ref}}) \bigr\|_2^2.
	\]
	where the $\mathrm{Style}$ operator is defined via Gram matrices of CLIP ViT-B/16 \cite{radford2021learning} image features and $\vx_{\mathrm{ref}}$ is the reference style image.
	Figure~\ref{fig:style_transfer_example_grid} shows example images demonstrating qualitatively that on this task our approach delivered more realistic images. Figure~\ref{fig:flux_comparisons} shows different images generated with different hyper-parameters for the same prompt and style image. We fix a budget of 20 conditioning updates per time step. For our method, we vary the balance between terminal loss and control regularisation $ \lambda $; for FlowChef, we vary the learning rate on the terminal loss (see Appendix \ref{app:flux}). As the conditioning strength increases, FlowChef’s outputs move closer to the style image, but do so by substantially deviating from the original generation trajectory, which introduces pronounced artefacts. In contrast, our regularised control preserves the generation path more faithfully, allowing the image to adopt stylistic characteristics without severe degradation. 
	
	Figure~\ref{fig:style_content_tradeoff} shows a quantitative summary across five prompts and nine style images. As hyperparameters vary, our method exhibits a more favourable trade-off between stylistic similarity and content preservation than FlowChef, consistently lying on the superior side of the resulting Pareto frontier.
	
	\textbf{Image colouration} We also considered the inverse problem of image colouration. Initially, FLUX.2 was prompted to colourise a provided luminance image, but struggled to remain faithful to the degradation process. We addressed this by introducing training-free guidance with MPC-Flow, using the terminal loss
	\[ \Phi(z)=\frac{1}{HW}\left\|\,\vx_r-\ell(\vx)\,\right\|_2^2\]
	where $\vx=\mathrm{Dec}(\vz) \in[0,1]^{3\times H\times W}$ and luminance is defined as $\ell(\vx)=0.299\,\vx_R+0.587\,\vx_G+0.114\,\vx_B$.
	
	Figure \ref{fig:recoloration} shows example results for image colourisation. Qualitatively, this shows that we can steer the flow towards an image that is consistent with the reference observation while remaining close to FLUX.2's learned manifold, resulting in a higher-quality recoloured image. Image similarity metrics averaged over ten images and three random seeds are reported in Table \ref{tab:colouration_results}. We additionally report the computational cost for this task with FLUX.2 in Table \ref{tab:colouration_results_comp_cost}, demonstrating the modest overhead of using MPC-Flow (MPC-RHC ($K=1$)) relative to the unconditional baseline.
	
	\begin{table}[]
		\centering
		\caption{Results for image colouration with FLUX.2. Hyper-parameters for FlowChef and MPC-Flow chosen to maximise PSNR.}\label{tab:colouration_results}
		\begin{tabular}{lcc}
			\toprule      Method  & PSNR $\uparrow$ & SSIM $\uparrow$ \\ \midrule
			Unconditional & 18.0 & 0.739 \\ \hdashline
			FlowChef &  \underline{21.3}  &  \underline{0.844} \\
			MPC-RHC (K=1) & \textbf{23.8}  & \textbf{0.891} \\ \bottomrule
		\end{tabular}
		\vspace{-0.25cm}
	\end{table}
	
	\begin{table}[]
		\centering
		\caption{Computational cost for image colouration with FLUX.2. Generation used 28 flow steps and 20 conditioning updates/flow step.}\label{tab:colouration_results_comp_cost}
		\begin{tabular}{lcc}
			\toprule      Method  & Peak VRAM (GB) & Time (s) \\ \midrule
			Unconditional & 17.7 & 53 \\ \hdashline
			FlowChef &  18.4 & 92 \\
			MPC-RHC (K=1) & 18.4 & 101 \\ \bottomrule
		\end{tabular}
		\vspace{-0.25cm}
	\end{table}
	
	\begin{figure*}[t]
		\centering
		\includegraphics[width=1\linewidth, trim={0 654 0 0}, clip]{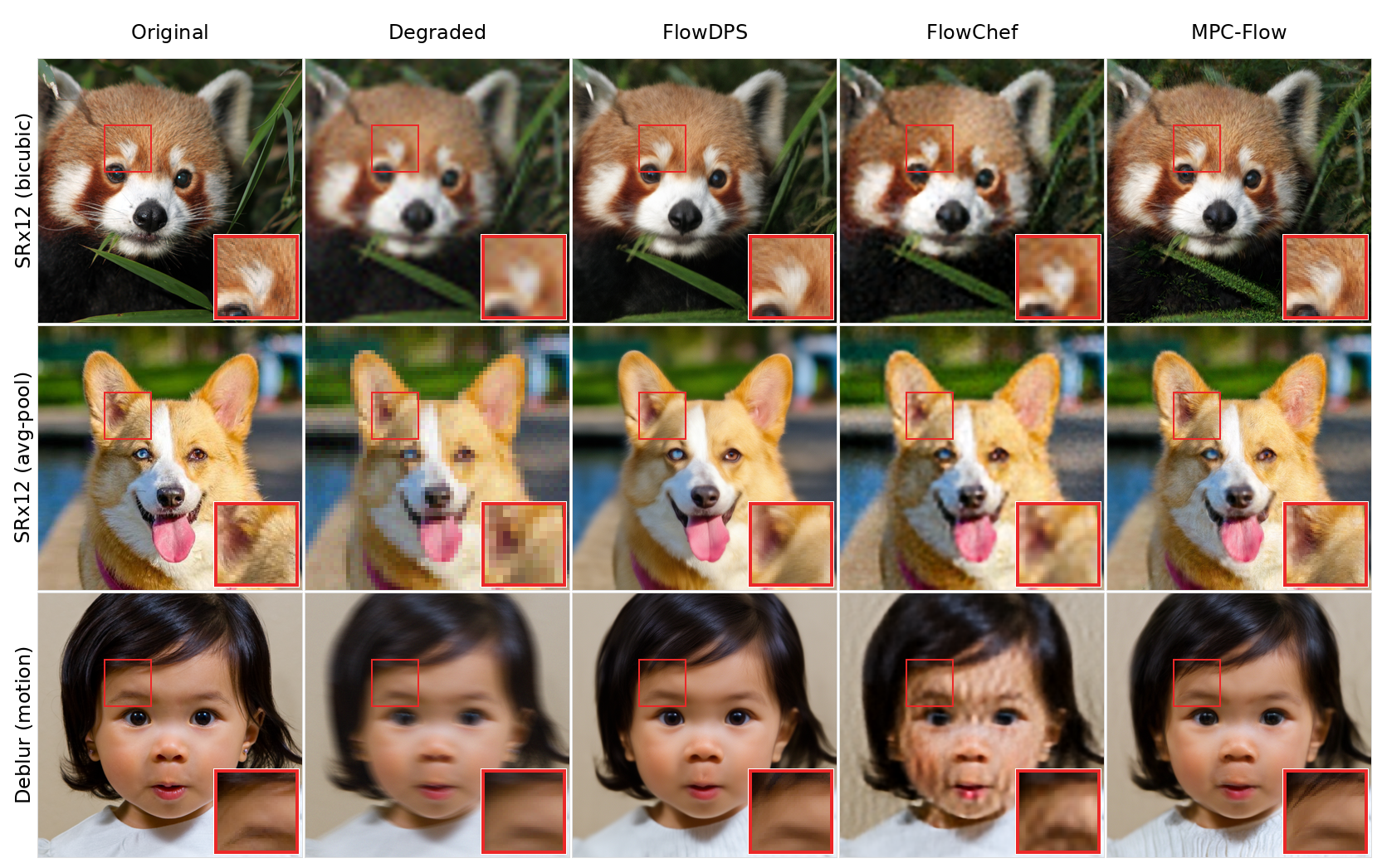}
		\caption{%
			Qualitative comparison using SD3.0 on a bicubic $12\times$ SR example.
		}
		\label{fig:main_sd30_comparisons}
	\end{figure*}
	
	\textbf{Super-resolution} Lastly, we also implemented MPC-RHC ($K=1$) for the task of super-resolving a low-resolution image (analogously to the previous image colouration example). The terminal loss used was a projected measurement-consistency loss in image space,
	\[
	\Phi(\vz)
	\;=\;
	\bigl\| A^\dagger A\,\mathrm{Dec}(\vz) - A^\dagger \vy \bigr\|_2^2,
	\qquad \vy = A \vx_{\mathrm{GT}} .
	\]
	where \(A\) is the \(8\times\) bicubic downsampling operator, \(A^\dagger\) is bicubic upsampling back to the high-resolution grid, and \(\vy\) is the low-resolution measurement. Figure \ref{fig:super_res_images_flux} shows an improved perception-distortion trade-off with MPC-Flow (MPC-RHC ($K=1$)) relative to FlowChef and an unconditional baseline.
	
	\begin{figure}
		\centering
		\includegraphics[width=\linewidth]{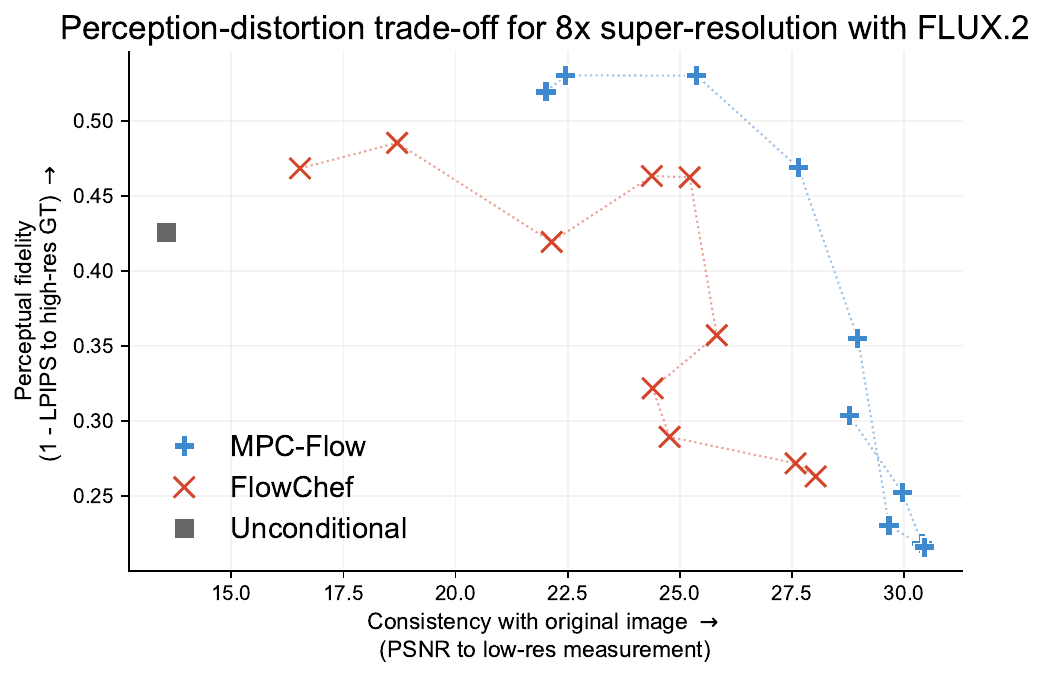}
		\caption{Perception-distortion trade-off for 8x super-resolution task with FLUX.2 (metrics averaged over 5 random seeds for a single image recovery task). Corresponding images are shown in Figure \ref{fig:super_res_images_flux}.}
		\label{fig:super-resolution}
	\end{figure}
	
	Additionally, to demonstrate the flexibility of our framework across latent flow models, we follow the noisy inverse-problem evaluation setup of FlowDPS~\citep{kim2025flowdps} and we qualitatively evaluate MPC-Flow ($K=1$) with SD3.0 \citep{esser2024scaling} as the latent flow model.
	We consider three tasks: $12\times$ super-resolution (SR) from average pooling, $12\times$ super-resolution from bicubic interpolation, and Gaussian deblurring.
	A comparison for the bicubic $12\times$ SR setting is shown in Figure~\ref{fig:main_sd30_comparisons}. 
	Further details and visual results are provided in Appendix~\ref{app:sd30}, with full comparisons in Figure~\ref{fig:sd30_comparisons}.
	
	\section{Conclusion \& Further Work}
	We introduce MPC-Flow, a model predictive control framework for conditional generation with flow-based generative models. By decomposing the global trajectory optimisation into a sequence of control subproblems, MPC enables practical and memory-efficient guidance at inference time. We provide formal guarantees connecting MPC-Flow to the underlying optimal control problem. Empirically, MPC-Flow achieves strong performance across linear and non-linear image restoration tasks, while scaling to massive architectures such as FLUX.2 (32B) on consumer hardware. The single-step RHC variant in particular offers a highly efficient and scalable approach to inference-time control. 
	
	As a gradient-based method MPC-Flow requires access to a differentiable objective function. Further, the fast MPC-RHC with $K\!=\!1$, does not always achieve good results on some linear inverse problems. MPC-Flow depends on the initial value $\vx_0$ of the control problem, as discussed in Section \ref{app:regularisation}. In all our experiments we simply used a random initialisation for $\vx_0$. However, more sophisticated initialisations are possible. For example, \citet{ben2024d} initialise $\vx_0$ by first integrating the dynamics backwards in time, starting at $t=1$ with a naive reconstruction, e.g., computed using the adjoint of the forward operator. 
	
	We model the control as a simple additive vector to the flow dynamics. More expressive parametrisations are possible: for example, introducing non-linear control via $\dot{\vx} = v_\theta(\vx + \vu, t)$ \cite{pandey2025variational}, or directly adapting the weights of the pre-trained model with $\dot{\vx} = v_{\vtheta+\vu}(\vx, t)$ using approaches such as low-rank adaptation \cite{hu2022lora} or control networks \cite{zhang2023adding}. 
	Further, the theoretical guarantees of Theorem~\ref{thm:hjb_optimality} and Theorem~\ref{thm:rhc_optimality} only apply to the continuous-time MPC formulation, while implemented methods introduce approximations: (i) time discretisation of the controlled dynamics via Euler integration; (ii) inexact numerical solution of each MPC subproblem and (iii) for MPC-$\Delta t$, approximation of the value function with a one-step surrogate. However, the discretisation error can be removed by adopting mean-flow models \cite{geng2026mean} instead of standard flow matching approaches.
	

	\section*{Acknowledgments}
	AD and RB acknowledge support from the EPSRC (EP/V026259/1). AD additionally acknowledges support from DESY (Hamburg, Germany), a member of the Helmholtz Association HGF. GW acknowledges support from the EPSRC CDT in Smart Medical Imaging [EP/S022104/1] and by a GSK Studentship.
	
	\section*{Impact Statement}
	This work improves inference-time control of flow-based generative models. As with other advances in generative modelling, increased controllability could be misused to generate harmful or misleading synthetic content, including deepfakes. 
	The method does not introduce new generative capabilities beyond existing models and is intended for beneficial applications such as image restoration and scientific imaging.
	
	\bibliography{bibliography}
	\bibliographystyle{icml2026}

	\newpage
	\appendix
	\onecolumn

	\section{Related Work} \label{sec:related_work}
	
	\paragraph{Continuous Normalising Flows}
	
	Continuous Normalising Flows \cite{chen2018neural} trained with Flow Matching \cite{lipman2022flow} have emerged as a powerful methods for generating high-dimensional data, for example for images \cite{esser2024scaling} or biological data \cite{wang2025simplefold}.
	A key advantage of Flow Matching is its ability to learn Gaussian probability paths corresponding to Optimal Transport (OT) displacements. 
	These paths yield straighter trajectories, enabling faster sampling \cite{albergo2022building,liu2022flow} compared to standard diffusion models \cite{song2020score}. 
	However, as training these models remains computationally expensive, there is significant interest in leveraging pre-trained unconditional models as priors for downstream conditional tasks, such as inpainting, super-resolution, or deblurring in imaging inverse problems.
	
	Existing approaches for conditional generation with continuous-time models generally fall into two categories. 
	The first involves conditional training and fine-tuning, where a model is either trained from scratch on paired data \cite{batzolis2021conditional} or the weights of the model are adapted \cite{domingo2024adjoint,zhang2023adding}. 
	These approaches rely on the availability of additional data and/or require a costly fine-tuning phase. 
	The second category consists of training-free guidance methods, which modify the sampling process to satisfy conditions without altering the unconditional model \cite{kim2025flowdps,pokle2023training,zhang2024flow}. It is this category that is the subject of this work.
	
	\paragraph{Guidance for Continuous-time Generative Models} 
	The problem of steering the generative processes has been heavily studied in the context of diffusion models \cite{song2020score}. 
	This can be framed as Stochastic Optimal Control (SOC), which has been used in the context of inverse problems \cite{denker2024deft,pandey2025variational}, fine-tuning \cite{domingo2024adjoint} and model personalisation \cite{rout2024rb}. 
	Further, SOC methods have been applied to sampling from unnormalised probability density functions \cite{zhang2021path,berner2022optimal,vargas2023denoising}. 
	These methods rely on stochastic calculus and score-based dynamics, which fundamentally differ from the deterministic ODEs induced by Flow Matching. 
	As an alternative, training-free guidance methods, such as DPS \cite{chung2022diffusion}, ReSample \cite{song2023solving} or DAPS \cite{zhang2025improving}, modify the sampling dynamics using approximations of conditional scores. 
	Several of these ideas have been adapted to flow-based models, for example RED-diff \cite{mardani2023variational} was first proposed for diffusion models and later adapted to flow-based models \cite{erbach2025solving}. Other approaches are specific to flow-based models; for example, FlowChef \cite{patel2024steering} exploits the deterministic structure of rectified flows to perform gradient-free steering by directly manipulating the vector field during sampling.
	These methods can be applied at inference time without changing the underlying pre-trained model. 
	However, they typically only provide limited guarantees: the final sample may not satisfy the condition exactly, and the trade-off between consistency to observations and staying close to the original trajectory is only implicit.

	\paragraph{Optimal Control in Flow Models} 
	Applying optimal control to flow-based models is an active area of research. 
	Existing approaches differ mainly in how gradients are computed, how control cost is handled, and the associated trade-offs of memory and computation time. 
	To mitigate the memory cost associated with large scale optimisation problems, several works utilise adjoint sensitivity methods \cite{chen2018neural}. 
	FlowGrad \cite{liu2023flowgrad} leverages a time-discrete version of the adjoint ODE to compute control gradients. 
	To reduce the computational cost, they introduce a metric for ODE straightness, allowing for fewer backward time steps.
	
	However, this approach still requires solving the full forward ODE at each optimisation step, and notably, the running cost of the control is excluded from the gradient computation. 
	Similarly, OC-Flow \cite{wang2024training} employs a time-discrete adjoint for gradient computation but implements a control regularisation implicitly via weight decay. 
	
	Both FlowGrad and OC-Flow focus primarily on image editing tasks. 
	D-Flow \cite{ben2024d} only optimises the initial value of the ODE, without introducing additional control terms. 
	Unlike adjoint-based approaches, D-Flow directly backpropagates gradients through the discrete ODE solver. 
	While this avoids adjoint approximation errors, it drastically increases the memory cost as the full trajectory and all model evaluations have to be kept in memory. 
	Finally, VGG-Flow \cite{liuvalue} estimates the value function gradient and fine-tunes the underlying flow-based model via a value gradient matching loss.  
	From these above-mentioned models only D-Flow has been applied to inverse problems. 
	
	Concurrently, HardFlow \cite{li2025hardflow} introduces a training-free framework for hard-constrained sampling based on receding-horizon model predictive control, closely related to the approach proposed in this work.
	
	\section{Proofs}
	
	\subsection{Existence of Optimal Control}
	\label{app:existence}
	Following the standard literature, see e.g. \cite{fleming2012deterministic}, we give an existence proof of the optimal control for our setting. 
	
	\begin{proposition}[Existence of optimal control]
		\label{prop:existence}
		Under the assumptions that
		\begin{itemize}
			\item[1)] $v_\vtheta$ is continuous in both arguments 
			\item[2)] $v_\vtheta$ is uniformly Lipschitz in $t$, i.e., $\| v_\vtheta(\vx_1, t) - v_\vtheta(\vx_2, t) \| \le L \| \vx_1 - \vx_2 \|$ for some $L > 0$,
			\item[3)] $v_\vtheta$ has at most linear growth, i.e., $\| v_\vtheta(\vx, t) \| \le a(t) + b \| \vx \|$ with $a \in L^1([0,1])$ and $b>0$,
			\item[4)] $\Phi: \R^n \to \R$ is lower semicontinuous,
		\end{itemize}
		there exists an optimal control $\vu^* \in L^2([0,1])$ of the control problem
		\begin{equation}
			\begin{split}
				&\min_{\vu} \left\{ \mathcal{J}(\vu) := \int_0^1 \| \vu(t) \|^2 dt + \Phi(\vx(1)) \right\} \\ 
				\text{s.t.}& \quad d\vx(t)/dt = v_\vtheta(\vx(t),t) + \vu(t) , \\ &\quad t \in [0,1], \vx(0) = \vx_0.
			\end{split}
		\end{equation}   
	\end{proposition}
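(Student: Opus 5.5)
The plan is to use the direct method of the calculus of variations on $L^2([0,1];\R^d)$. Throughout I assume $\Phi$ is bounded below, as it is for the nonnegative losses considered in the paper; without this, $\inf \mathcal{J}$ could be $-\infty$. Condition 2) should be read as Lipschitz continuity in $\vx$, uniformly in $t$.

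\textbf{Step 1: well-posedness of the state.} For any $\vu\in L^2$, the right-hand side $v_\vtheta(\vx,t)+\vu(t)$ is Carathéodory: it is measurable in $t$, Lipschitz in $\vx$, and integrably bounded by conditions 2) and 3). Hence there is a unique absolutely continuous solution of
\[
\vx(t)=\vx_0+\int_0^t v_\vtheta(\vx(s),s)\,ds+\int_0^t \vu(s)\,ds .
\]
So $\mathcal{J}$ is well defined on all of $L^2$. Moreover $\mathcal{J}(0)<\infty$, so $m:=\inf\mathcal{J}$ is finite.

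\textbf{Step 2: compactness.} Take a minimising sequence $(\vu_n)$. Since $\Phi\ge -c$, we have $\|\vu_n\|_{L^2}^2\le \mathcal{J}(\vu_n)+c$, so the controls are bounded in $L^2$. Passing to a subsequence, $\vu_n\rightharpoonup \vu^*$ weakly in $L^2$.

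For the states $\vx_n$, the linear growth condition 3) and Gronwall's inequality give
\[
\sup_t\|\vx_n(t)\|\le \Bigl(\|\vx_0\|+\|a\|_{L^1}+\|\vu_n\|_{L^1}\Bigr)e^{b},
\]
which is uniformly bounded. Equicontinuity follows from
\[
\|\vx_n(t)-\vx_n(s)\|\le \int_s^t a+bM\,|t-s|+|t-s|^{1/2}\|\vu_n\|_{L^2},
\]
where $M$ is the uniform bound above. By Arzelà–Ascoli, a further subsequence satisfies $\vx_n\to\vx^*$ uniformly on $[0,1]$.

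\textbf{Step 3: identifying the limit trajectory.} This is the main obstacle, since we must pass to the limit in a nonlinear ODE while only having weak convergence of the controls. The difficulty disappears because the control enters affinely.

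For the drift term, condition 2) gives
\[
\Bigl\|\int_0^t \bigl(v_\vtheta(\vx_n,s)-v_\vtheta(\vx^*,s)\bigr)\,ds\Bigr\|\le L\,\|\vx_n-\vx^*\|_\infty\to0 .
\]
For the control term, $\int_0^t\vu_n\,ds=\langle \vu_n,\mathbf 1_{[0,t]}\rangle\to\int_0^t\vu^*\,ds$ by weak convergence, applied componentwise.

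Hence $\vx^*$ satisfies the integral equation driven by $\vu^*$. By the uniqueness from Step 1, $\vx^*$ is the state associated with $\vu^*$.

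\textbf{Step 4: lower semicontinuity.} The $L^2$ norm is weakly lower semicontinuous, so $\|\vu^*\|^2\le\liminf\|\vu_n\|^2$. Since $\vx_n(1)\to\vx^*(1)$ and $\Phi$ is lower semicontinuous (condition 4), $\Phi(\vx^*(1))\le\liminf\Phi(\vx_n(1))$.

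Adding the two inequalities and using $\liminf A_n+\liminf B_n\le\liminf(A_n+B_n)$ gives $\mathcal{J}(\vu^*)\le\liminf\mathcal{J}(\vu_n)=m$. Therefore $\vu^*\in L^2$ is an optimal control.
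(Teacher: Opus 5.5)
Your proof is correct, but it treats the trajectories differently from the paper. The paper never extracts a convergent subsequence of states. It defines $\vx^*$ from the outset as the trajectory driven by the weak limit $\vu^*$. It then shows that the integrated control errors $E_k(t)=\int_0^t(\vu_k-\vu^*)\,ds$ tend to zero pointwise by weak convergence, and uniformly via Arzel\`a--Ascoli. A single Gr\"onwall estimate with the Lipschitz constant then gives $\sup_t\|\vx_k(t)-\vx^*(t)\|\le e^{L}\sup_t\|E_k(t)\|\to 0$. That route is shorter and quantitative, and it needs neither a priori state bounds nor a separate identification-plus-uniqueness step, but it depends entirely on the global Lipschitz condition.

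Your argument instead uses a compactness-and-identification scheme: an a priori bound from linear growth, Arzel\`a--Ascoli on the $\vx_n$, and passage to the limit in the integral equation. This is the classical route and is more robust. Lipschitz continuity enters only in two places. One is the drift limit, which could equally come from uniform continuity of $v_\vtheta$ on $\{\|\vx\|\le M\}\times[0,1]$. The other is uniqueness of the state. So your argument extends to merely continuous drifts with linear growth if one minimises over admissible state--control pairs.

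Your added assumption that $\Phi$ is bounded below is not a defect of your approach; it is genuinely necessary. With $v_\vtheta=0$ and $\Phi(\vx)=-\|\vx\|^3$, constant controls drive $\mathcal{J}$ to $-\infty$. The paper's proof uses this assumption tacitly when it asserts that the control costs along a minimising sequence are bounded.
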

	
	\begin{proof}
		First, we observe that the infimum is finite, i.e., $\inf \mathcal{J}(\vu) < \infty$, as $\vu=0$ is an admissible control and we have the grwoth condition on $v_\vtheta$. Let $(\vu_k)$ be a minimising sequence such that $\mathcal{J}(\vu_k) \to \inf \mathcal{J}(u)$. As the control cost $\int_0^1 \| \vu_k(t) \|^2 dt$ is bounded, we have a weakly convergent subsequence (which we still denote it as $\vu_k$ for an easier notation), such that $\vu_k \rightharpoonup \vu^*$ in $L^2([0,1])$. We have to show that the optimal candidate $\vu^*$ is actually a minimiser, i.e., $\mathcal{J}(\vu^*) = \inf \mathcal{J}(u)$.
		
		For this, we first look at the convergence of the trajectories. Let $\vx_k$ and $\vx^*$ denote the trajectories corresponding to $\vu_k$ and $\vu^*$ respectively. We can express the trajectories as 
		\begin{align}
			\vx_k(t) = \vx_0 + \int_0^t v_\vtheta(\vx_k(s), s) ds + \int_0^t \vu_k(s) ds,  \quad
			\vx^*(t)  = \vx_0 + \int_0^t v_\vtheta(\vx^*(s), s) ds + \int_0^t \vu^*(s) ds.
		\end{align}
		Let us define the integrated control error as $E_k(t) := \int_0^t (\vu_k(s) - \vu^*(s)) ds$. Since  $u_k \rightharpoonup u^*$ weakly in $L^2([0,1])$, we obtain $E_k(t) \to 0$ for any fixed $t$. Furthermore, as the controls $(\vu_k)$ are bounded, we have that $E_k(t)$ is bounded. By the Arzelà-Ascoli, there is then a subsequence of $E_k$ which converges uniformly to $0$, i.e., there is $\epsilon_k:=\sup_{t\in[0,1]}\| E_k\| \to 0$.
		
		Further, we get for the trajectories
		\begin{align}
			\|\vx_k(t) - \vx^*(t) \| &\le \int_0^t \|v_\vtheta(\vx_k(s), s) - v_\vtheta(\vx^*(s), s) \| ds + \| E_k(t) \| \\ 
			&\le  \int_0^1 L \| \vx_k(s) - \vx^* \| + \epsilon_k\\
			&\le \epsilon_k \exp(Lt)
		\end{align}
		where the last inequality is due to Grönwall. As $\epsilon_k \to 0$, we have that $\vx_k(t) \to \vx^*(t)$ uniformly on $[0,1]$. In particular, $\vx_k(1) \to \vx^*(1)$.
		
		Finally, we get 
		\begin{align}
			\mathcal{J}(\vu^*) &= \int_0^1 \| \vu^*(t) \|^2 dt + \Phi(\vx^*(1)) \le \liminf_{k \to \infty} \left( \int_0^1 \| \vu_k(t) \|^2 dt + \Phi(\vx_k(1))  \right) = \inf \mathcal{J}(\vu),
		\end{align}
		as both the squared $L^2$ norm and $\Phi$ are lower semicontinuous. Thus, $\vu^*$ is an optimal control.
	\end{proof}
	
	Importantly, we do not need that the control is in some compact set due to the control cost $\int_0^1 \| u(t) \| dt$. 
	
	\subsection{Optimality of MPC}
	
	\subsubsection{Proof of Theorem \ref{thm:rhc_optimality}}
	\label{app:proof_rhc}
	\begin{proof}
		This result follows directly from Bellman's principle of optimality, see e.g.~\citet{liberzon2011calculus}. Let the optimal trajectory be $\vx^*$. Suppose we are at time $t$ in state $\vx^*(t)$. The global cost can be split
		$$
		\mathcal{J}(\vu) = \int_0^t \|\vu\|^2 d\tau + \int_t^1 \|\vu\|^2 d\tau + \lambda\Phi(\vx(1)).
		$$
		Since the past cost in $[0,t]$ is fixed, minimising the global cost is equivalent to minimising the future cost, i.e. the second and third terms. The solution to the sub-problem \eqref{eq:mpc_receeding_horizon} at time $t$ is exactly the restriction of the global optimal control to the interval $[t, 1]$. Thus, re-optimising at each step yields the same control as \eqref{eq:flow_optimal_control}.
	\end{proof}
	
	\subsubsection{Proof of Theorem \ref{thm:hjb_optimality}}
	\label{app:proof_deltat}
	\begin{proof}
		The value function $V(t, \vx(t))$ can be expressed as 
		\begin{align}   
			\label{eq:princ_opt}
			\!\min_\vu \left\{ \int_t^{t+\Delta t}\!\!\!\| \vu(\tau)\|^2 d\tau\!+\!V(t\!+\!\Delta t, \vx(t\!+\!\Delta t)) \right\},
		\end{align}
		for a time step $\Delta t$. This optimisation problem is identical to \eqref{eq:mpc_deltat_horizon} with $\Phi_\text{MPC}(\vx(t + \Delta t)) =V(t + \Delta t,\vx(t + \Delta t))$. This means that minimising \eqref{eq:mpc_deltat_horizon} for this choice of terminal cost evaluates the value function at every time step and thus we obtain the optimal control. 
	\end{proof}

	
	\let\origthefigureC\thefigure
	\let\origthefigureD\thefigure
	\let\origthefigureE\thefigure
	\let\origthetableE\thetable
	\let\origthefigureF\thefigure
	\renewcommand{\thefigure}{C\origthefigureC}
	\setcounter{figure}{0}
	
	\section{Validating optimality guarantees with a toy example}\label{app:hexagon}
	
	In Theorem~\ref{thm:rhc_optimality}, we establish the optimality of the receding-horizon control (RHC) strategy under the assumption that each finite-horizon sub-problem is solved exactly.
	We complement this theoretical result with an empirical validation on a simple two-dimensional controlled flow-matching example, where the flow trajectories can be visualised directly.
	
	We train a flow model using flow matching to generate samples lying on the boundary of a regular hexagon with side length~2.
	Training pairs are constructed by sampling $\vx_0 \sim \mathcal{N}(0,I)$ and $\vx_1$ uniformly from the hexagon boundary, and defining an affine interpolation path between $\vx_0$ and $\vx_1$.
	The model then learns a time-dependent velocity field $\vv(\vx_t,t)$ that matches the instantaneous velocity of this affine path at each time step (see equation \eqref{eq:flow_matching_affine}).
	Integrating $\vv$ thus transports isotropic noise toward the hexagon boundary along approximately straight-line trajectories.
	
	At inference time, we apply MPC to steer the generative trajectory toward a specific mode of the distribution. In particular, the terminal loss is chosen as
	\[
	\Phi(\vx) = \| \vx - \vx_{\text{corner}} \|_2^2,
	\]
	where $\vx_{\text{corner}}$ denotes the coordinates of the lower-right corner of the hexagon.
	While the learned flow alone produces samples uniformly along the hexagon's boundary, the MPC controller biases the otherwise affine transport path toward the selected corner, providing a concrete and interpretable setting in which to assess the optimality guarantees of Theorem~\ref{thm:rhc_optimality}.
	
	\paragraph{MPC-RHC} We first present results for MPC-RHC in Figure \ref{fig:2d_toy_visual}.
	Recall that at time step $t$, MPC-RHC plans the remaining trajectory as an Euler discretisation of $\vx_{[t, 1]}$ with $K$ steps.
	It may be observed that the MPC-RHC methods approximate the globally optimal control as $K$ increases.
	Further, all methods achieve the desired target point while balancing control energy and the terminal loss.
	Figure \ref{fig:2d_toy_convergence} demonstrates empirically that the error to the globally optimal control decreases as $K$ increases, while Figure \ref{fig:2d_toy_error} demonstrates that as $K$ increases, the method finds a stable balance between control regularisation and satisfying the terminal objective.

	\begin{figure}[th]
		\centering
		\includegraphics[width=.65\linewidth]{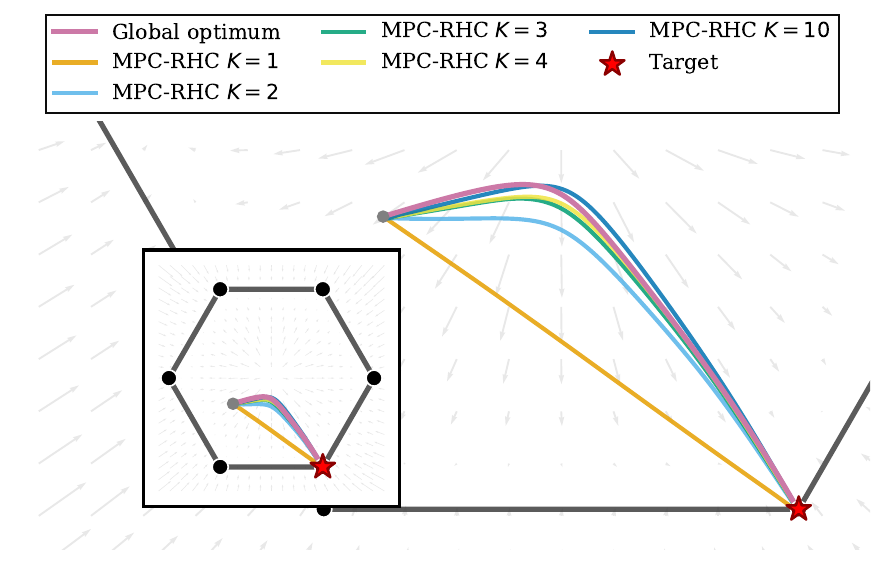}
		\caption{MPC-RHC for the 2D hexagon toy example. The flow model is trained to transport random points to the hexagon boundary. MPC-RHC's terminal loss is $\Phi(\vx) = \| \vx - \vx_\text{corner} \|^2$, where $\vx_\text{corner}$ are the coordinates of the lower right corner of the hexagon. The "global optimum" path represents the exact solution to the optimal control problem (solved globally over the whole trajectory $\vx_{[0,1]}$).}
		\label{fig:2d_toy_visual}
	\end{figure}
	
	\begin{figure}[th]
		\centering
		\begin{subfigure}[t]{0.48\linewidth}
			\centering
			\includegraphics[width=\linewidth]{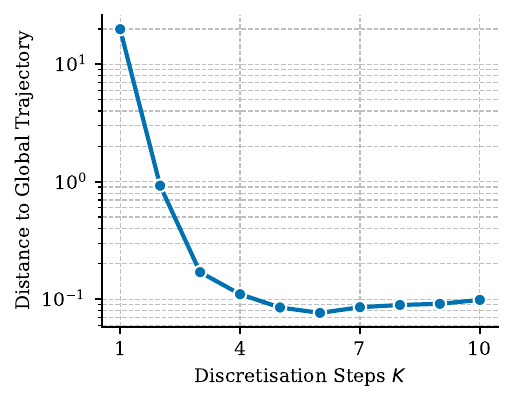}
			\caption{Distance to the globally optimal trajectory for different discretisation steps $K$.}
			\label{fig:2d_toy_convergence}
		\end{subfigure}
		\hfill
		\begin{subfigure}[t]{0.48\linewidth}
			\centering
			\includegraphics[width=\linewidth]{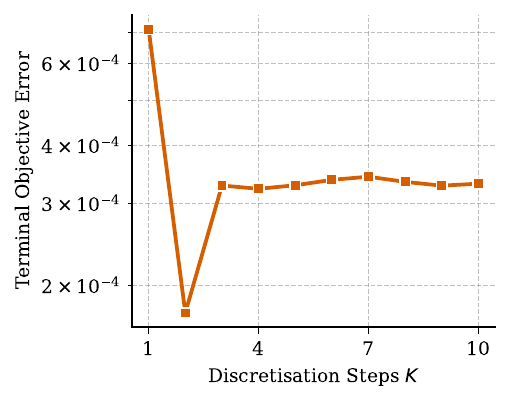}
			\caption{Terminal error of MPC-RHC trajectories for different discretisation steps $K$.}
			\label{fig:2d_toy_error}
		\end{subfigure}
		\caption{Effect of discretisation steps $K$ on MPC-RHC performance.}
		\label{fig:2d_toy_mpc}
	\end{figure}

	\paragraph{MPC-$\Delta t$} For MPC-$\Delta t$ we use a single Euler step approximation for the value functions, see \eqref{eq:value_fun_approx}. For this 2D example, we can numerically solve the Hamilton–Jacobi–Bellman (HJB) equation using a semi-Lagrangian scheme that propagates the value function. This yields both the value function and the optimal control. In Figure~\ref{fig:hexagon_delta_t} we compare this true value function against the one step approximation. At early times ($t \approx 0$) the discrepancy is large, but is decreases as time progresses. Due to this discrepancy at small times we do not recover the global optimal control. However, as time increases the MPC-$\Delta t$ control follows the global control, see also (f) in Figure~\ref{fig:hexagon_delta_t}.
	
	\begin{figure}[t]
		\centering
		\includegraphics[width=1.0\linewidth]{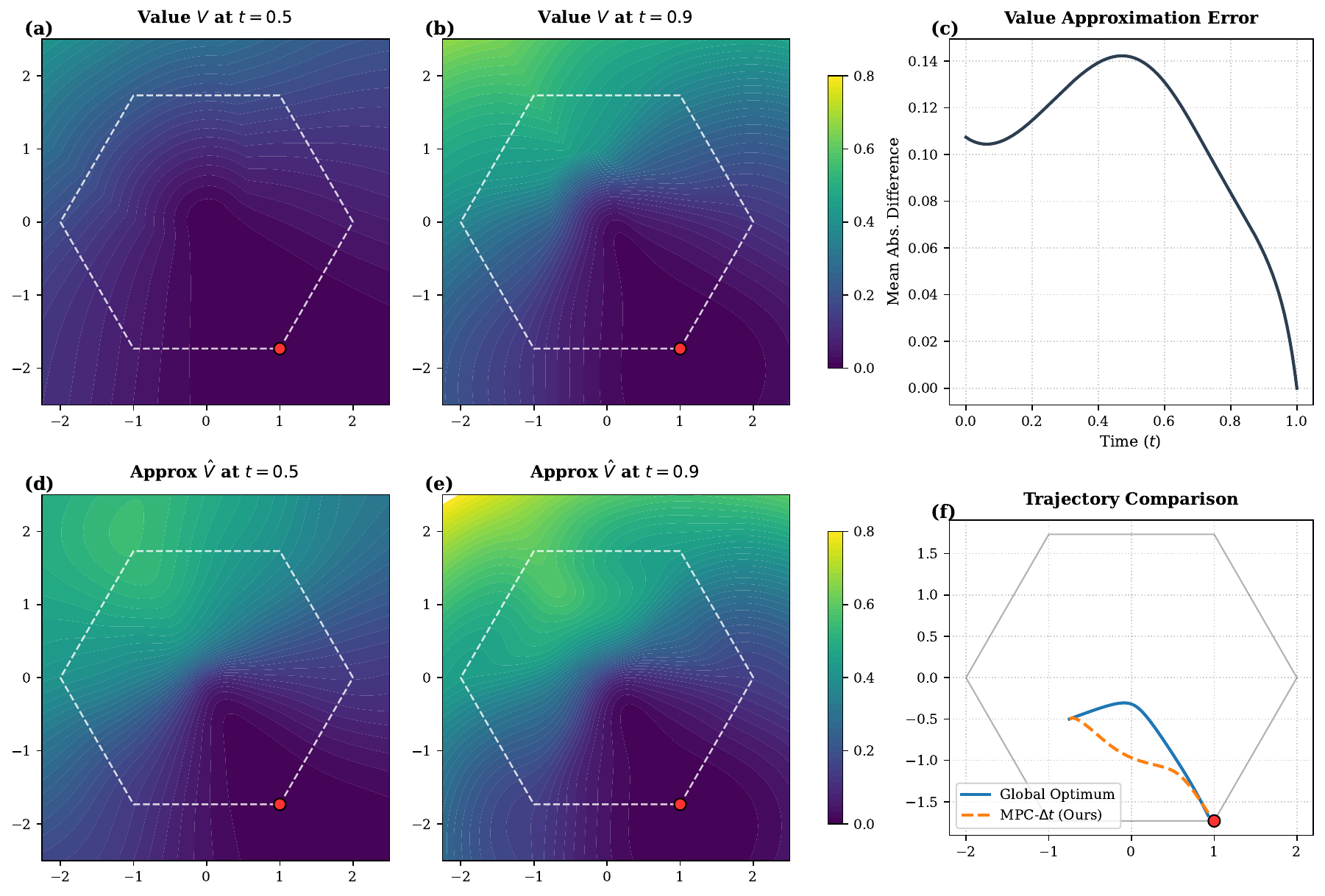}
		\caption{MPC-$\Delta t$ for the 2D hexagon toy example. In (a) and (b) we show the value function (computed using the HJB) at $t=0.5$ and $t=0.9$. Similarly, in (d) and (e) we show the respective one-step approximation. The approximation error over time is in (c). Finally, (f) shows the global optimal control and the trajectory computed with MPC-$\Delta t$.}
		\label{fig:hexagon_delta_t}
	\end{figure}

	
	\renewcommand{\thefigure}{D\origthefigureD}
	\setcounter{figure}{0}
	
	\section{Computed Tomography}
	\label{app:computed_tomography}
	We make use of the OrganCMNIST subset of MedMNIST \cite{medmnistv1,medmnistv2} to provide ablations on the hyperparameters of the different variations of MPC-Flow. 
	The OrganCMNIST dataset contains $64\times 64$px adnominal CT images. 
	In total there are $\num{23582}$ images, split into $\num{12975}$ training, $\num{2392}$ validation and $\num{8216}$ test images.
	We train an unconditional flow model on this dataset. 
	Here, we make use of the flow matching library \cite{lipman2024flow}. 
	The flow model is implemented as a time-dependent UNet with approx.~$8$Mio parameters \cite{dhariwal2021diffusion}. 
	For the forward operator we make use of the parallel-beam Radon transform with $18$ angles and corrupt the simulated measurements with $1\%$ relative additive Gaussian noise.
	
	\paragraph{MPC-$\Delta t$} We first provide ablations for MPC-$\Delta t$, see Algorithm \ref{alg:flow_mpc_deltat}. For every $t'$, we have to solve the optimisation problem 
	\begin{align*}
		\vu^* = \arg\min_\vu \left[ \|\vu\|^2 + \lambda \Phi_\text{MPC}(\vx_{t'+\Delta t},\, t'+\Delta t) \right] \\ 
		\text{with } \vx_{t'+\Delta t} = \vx_{t'} + \Delta t\big(v_\vtheta(\vx_{t'}, t') + \vu\big),
	\end{align*}
	with $\lambda = \Tilde{\lambda}/\Delta t$, where the intermediate loss is defined as 
	\begin{align*}
		\Phi_\text{MPC}(\vx_{t'+\Delta t},t'+\Delta t) \coloneqq \Phi(\vx_{t'+\Delta t} + (1 - (t' + \Delta t)) v_\vtheta(\vx_{t'+\Delta t}, t'+\Delta t)).
	\end{align*}
	We solve this optimisation problem using the Adam optimiser \cite{kingma2014adam}. For these experiments, we use a fixed learning rate of $0.1$ and re-initialise the optimiser with $\vu_\text{init}=\mathbf{0}$ at every new time step. We vary both the data-consistency strength $\lambda$ and the step size $\Delta t=1/N$. In Figure \ref{fig:mpc_delta_t_lambda} we study the effect of the data-consistency strength $\lambda$, while keeping $N=40$ constant. Here, we observe that the optimal value, both for the PSNR and SSIM, is at $\lambda \approx \num{1e4}$ and then degrades for larger values. Further, for a fixed $\lambda=\num{1e4}$ we show the dependency on the discretisation $N$ in Figure \ref{fig:mpc_delta_t_N}. For a coarser discretisation, the flow model has less ability to explore fine-grained modes of the image distribution, and returns a smoother image; PSNR favours smoother images over incorrectly detailed images, so PSNR rises as $N$ decreases in this case. We show exemplary reconstruction in Figure \ref{fig:mpc_delta_t_MedMNIST_reconstruction}.
	
	\begin{figure}[th]
		\centering
		\begin{subfigure}[t]{0.48\linewidth}
			\centering
			\includegraphics[width=\linewidth]{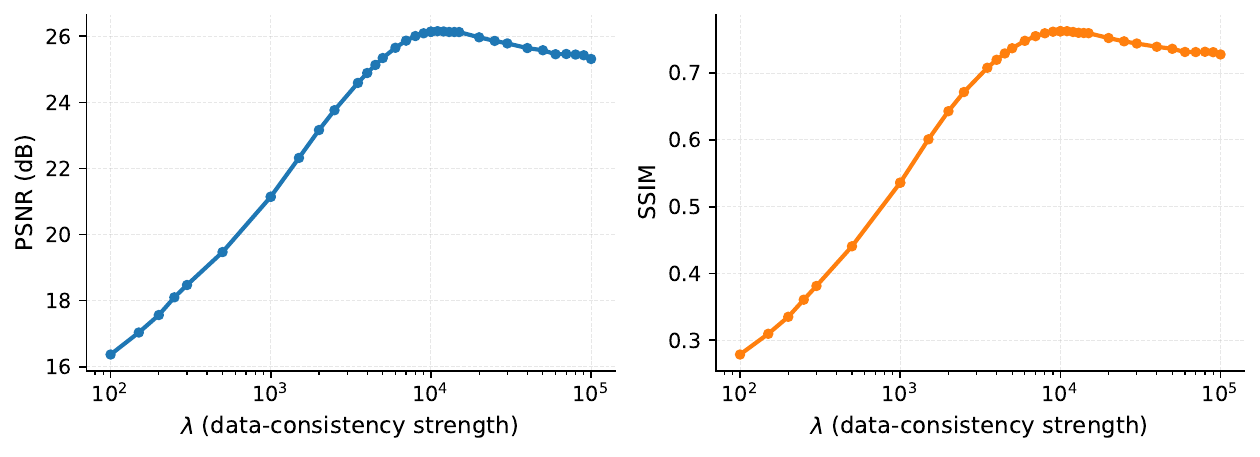}
			\caption{MPC-$\Delta t$: Dependence on data-consistency strength $\lambda$.}
			\label{fig:mpc_delta_t_lambda}
		\end{subfigure}
		\hfill
		\begin{subfigure}[t]{0.48\linewidth}
			\centering
			\includegraphics[width=\linewidth]{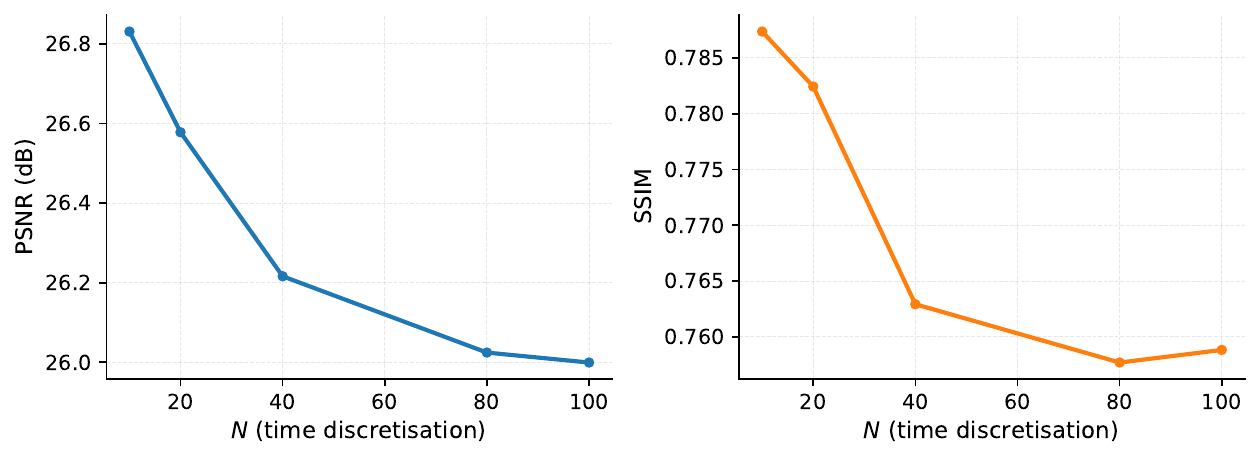}
			\caption{MPC-$\Delta t$: Dependence on discretisation $N$.}
			\label{fig:mpc_delta_t_N}
		\end{subfigure}
		\caption{Effect of $\Delta t= 1/N$ and the data-consistency strength $\lambda$ on the performance of MPC-$\Delta t$. We plot both the mean PSNR and SSIM over $10$ images from the test set.}
		\label{fig:mpc_delta_t_medmnist}
	\end{figure}
	
	\begin{figure}[h]
		\centering
		\includegraphics[width=1.0\linewidth]{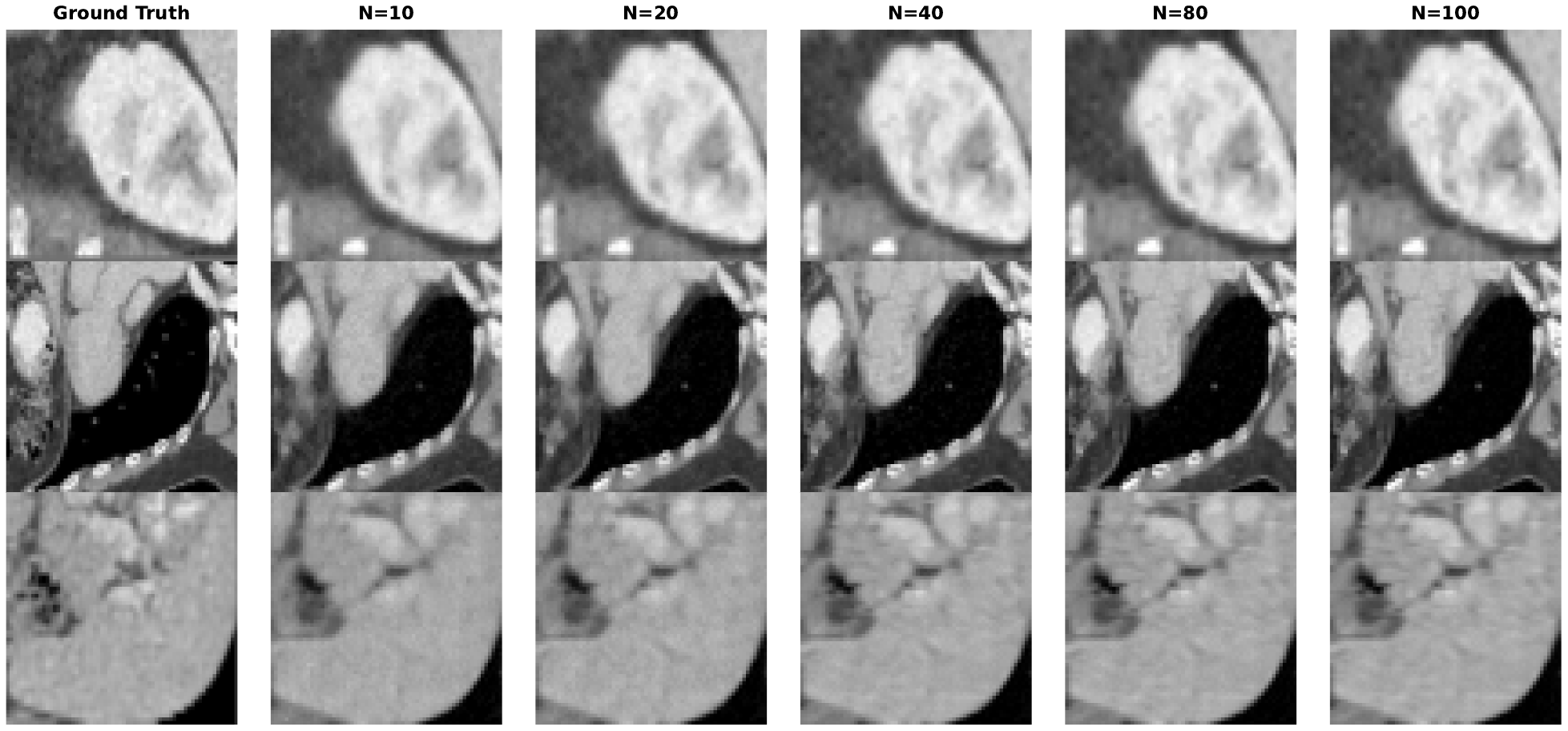}
		\caption{Example reconstruction using MPC-$\Delta t$ with $\lambda=\num{1e4}$ with different choices of the discretisation resolution $N=10, 20, 40, 80, 100$. For lower $N$, the images are slightly smoother, giving rise to a higher PSNR in Figure \ref{fig:mpc_delta_t_N}.}
		\label{fig:mpc_delta_t_MedMNIST_reconstruction}
	\end{figure}

	

	\renewcommand{\thefigure}{E\origthefigureE}
	\setcounter{figure}{0}
	\renewcommand{\thetable}{E\origthetableE}
	\setcounter{table}{0}

	\section{Image Restoration}\label{app:image_restoration}
	
	This section complements the experimental study on linear imaging tasks (denoising, deblurring, super-resolution, random inpainting, and box inpainting) and a nonlinear imaging task (nonlinear deblurring).
	
	We report qualitative results for the best-performing methods (or those most suitable for comparison with our proposed approach) across each image restoration task; see \cref{fig:celeba_images_denoising,fig:celeba_images_deblurring,fig:celeba_images_superres,fig:celeba_images_random_inpainting,fig:celeba_images_box_inpainting,fig:celeba_images_non_linear_deblurring} for results on the CelebA test set.
	
	We also report the hyperparameters used in each experiment. 
	Hyperparameters were selected on a subset of the CelebA validation split. 
	\cref{tab:image_restoration_tasks_hyperparams} summarises the hyperparameters for FlowGrad, OC-Flow, FlowChef and FlowDPS, along with our proposed methods.
	
	For the linear restoration tasks, the hyperparameters of OT-ODE, D-Flow, Flow-Priors, and PnP-Flow are taken from \citet{martin2024pnp}. We use $100$ time steps for all tasks.
	For the nonlinear task, hyperparameters were selected on a subset of the validation split. 
	
	Note that for the nonlinear deblurring task (see \cref{tab:celeb_linear}), we adopt the nonlinear forward operator used in \citet{chung2022diffusion,denker2024deft,wang2024dmplug}.
	The operator is parameterised by a trained neural network \cite{tran2021explore}, inducing a highly nonlinear blur.
	Our implementation follows the publicly available codebase of Blur-Kernel-Space-Exploring\footnote{\url{https://github.com/VinAIResearch/blur-kernel-space-exploring}}.

	\begin{table*}[th]
		\centering
		\small
		\resizebox{\textwidth}{!}{ 
			\begin{tabular}{l ccc ccc ccc ccc ccc ccc c}
				\toprule
				\multirow{2}{*}{Method} &
				\multicolumn{3}{c}{\textbf{Denoising}} &
				\multicolumn{3}{c}{\textbf{Deblurring}} &
				\multicolumn{3}{c}{\textbf{Super-res.}} &
				\multicolumn{3}{c}{\textbf{Rand. inpaint.}} &
				\multicolumn{3}{c}{\textbf{Box inpaint.}} &
				\multicolumn{3}{c}{\textbf{Nonl. Debl.}} & \textbf{Time/img}
				\\
				& \multicolumn{3}{c}{$\sigma=0.2$}
				& \multicolumn{3}{c}{$\sigma=0.05,\ \sigma_b=1.0$}
				& \multicolumn{3}{c}{$\sigma=0.05,\ \times2$}
				& \multicolumn{3}{c}{$\sigma=0.01,\ 70\%$}
				& \multicolumn{3}{c}{$\sigma=0.05,\ 40\times40$}
				& \multicolumn{3}{c}{$\sigma=0.05$} & [s]   \\
				\cmidrule(lr){2-4}\cmidrule(lr){5-7}\cmidrule(lr){8-10}\cmidrule(lr){11-13}\cmidrule(lr){14-16}\cmidrule(lr){17-19}
				& \multicolumn{3}{c}{PSNR $\uparrow$ / SSIM $\uparrow$ / LPIPS $\downarrow$}
				& \multicolumn{3}{c}{PSNR / SSIM / LPIPS}
				& \multicolumn{3}{c}{PSNR / SSIM / LPIPS}
				& \multicolumn{3}{c}{PSNR / SSIM / LPIPS}
				& \multicolumn{3}{c}{PSNR / SSIM / LPIPS}
				& \multicolumn{3}{c}{PSNR / SSIM / LPIPS} \\
				\midrule
				Degraded      & 20.00 & 0.348 & 0.373 & 27.67 & 0.740 & 0.125 & 10.17 & 0.182 & 0.827 & 11.82 & 0.197 & 1.034 & 22.12 & 0.742 & 0.213 & 21.58 & 0.516 & 0.370 & -- \\
				\noalign{\vskip 0.25ex}
				\hdashline
				\noalign{\vskip 0.25ex}
				OT-ODE        & 30.50 & 0.867 & 0.110 & 32.63 & 0.915 & 0.096 & 31.05 & 0.902 & 0.091 & 28.36 & 0.865 & 0.137 & 28.84 & 0.914 & 0.102 & 21.95 & 0.626 & 0.118 &  $5.19$\\
				D-Flow        & 26.42 & 0.651 & 0.078 & 31.07 & 0.877 & 0.053 & 30.75 & 0.866 & 0.027 & 33.07 & 0.938 & 0.021 & 29.70 & 0.893 & 0.035 & \textbf{26.41} & \underline{0.682} & \underline{0.069} & 26.31 \\
				Flow-Priors   & 29.26 & 0.766 & 0.137 & 31.40 & 0.856 & 0.056 & 28.35 & 0.717 & 0.101 & 32.33 & 0.945 & 0.018 & 29.40 & 0.858 & 0.048 & 21.84 & 0.490 & 0.318 & 24.30\\
				PnP-Flow      & \textbf{32.45} & \textbf{0.911} & 0.056 & \textbf{34.51} & \textbf{0.940} & 0.046 & \underline{31.49} & \underline{0.907} & 0.055 & \textbf{33.54} & \textbf{0.953} & 0.021 & \underline{30.59} & \textbf{0.943} & 0.043 & 22.19 & 0.643 & 0.263 & 5.97\\
				FlowChef & 21.57 & 0.403 & 0.301 & 26.02 & 0.596 & 0.208 & 27.68 & 0.729 & 0.085 & 28.46 & 0.848 & 0.061 & 26.41 & 0.776 & 0.067 & 17.59 & 0.279 & 0.451 & 1.50  \\
				FlowDPS & 31.45 & 0.900 & \underline{0.041} & 31.84 & 0.906 & 0.043 & 26.01 & 0.786 & 0.118 & 26.42 & 0.798 & 0.108 & 28.73 & 0.884 & 0.063 & 23.79 & 0.707 & 0.157 & 3.76  \\ 
				\noalign{\vskip 0.25ex}
				\hdashline
				\noalign{\vskip 0.25ex}
				OC-Flow       & 19.39 & 0.559 & 0.207  & 27.81 & 0.828 & 0.082 & 24.95 & 0.746 & 0.114 & 16.77 & 0.456 & 0.418 & 26.97 & 0.856 & 0.075 & 20.17 & 0.528 & 0.170 & 234.76 \\
				FlowGrad      & 26.07 & 0.777 & 0.116  & 30.51 & 0.883 & 0.067 & 29.72 & 0.871 & 0.050 & 29.16 & 0.881 & 0.039 & 25.06 & 0.901 & 0.068 & 19.98 & 0.521 & 0.170 & 315.66 \\
				\noalign{\vskip 0.25ex}
				\hdashline
				\noalign{\vskip 0.25ex}
				MPC-\(\Delta t\)  & \underline{31.55} & \underline{0.877} & \textbf{0.029} & \underline{33.65} & \underline{0.928} & \textbf{0.017} & \textbf{31.89} & \textbf{0.911} & \textbf{0.018} & 32.76 & 0.942 & \underline{0.015} & \textbf{30.68} & \underline{0.931} & \textbf{0.021} & \underline{25.44} & \textbf{0.718} & \textbf{0.067} & 89.25 \\
				MPC-RHC (K=1) & 27.30 & 0.708 & 0.092 & 32.47 & 0.900 & 0.030 & 18.02 & 0.421 & 0.325 & 18.31 & 0.455 & 0.419 & 26.13 & 0.877 & 0.046 & 17.63 & 0.405 & 0.244 & 2.05 \\
				MPC-RHC (K=3) & 29.13 & 0.823 & 0.066 & 33.35 & 0.925 & \underline{0.023} & 31.39 & 0.896 & \underline{0.029} & \underline{33.25} & \underline{0.951} & \textbf{0.013} & 30.03 & 0.929 & \underline{0.026} & 21.71 & 0.596 & 0.133 & 175.20 \\
				\bottomrule
		\end{tabular}}
		\caption{Comparisons of state-of-the-art methods for several inverse problems on CelebA. 
			Results are averaged across 100 test images.
			Reconstruction time is measured for the \textbf{Denoising} task.
			We compare general flow-based inverse problem solvers (OT-ODE, D-Flow, Flow-Priors PnP-Flow, FlowChef and FlowDPS), optimal control-based methods (FlowGrad, OC-Flow) and our MPC-Flow variants.
		}
		\label{tab:celeb_linear}
	\end{table*}
	
	\begin{table}[t]
		\centering
		\scriptsize
		\setlength{\tabcolsep}{3pt}
		\resizebox{\linewidth}{!}{
			\begin{tabular}{lcccccc}
				\toprule
				& \multicolumn{6}{c}{\textbf{Image restoration tasks}} \\
				\cmidrule(lr){2-7}
				Method & Denoising & Deblurring & Super-res. & Rand. inpaint. & Box inpaint. & Nonl. debl. \\
				\midrule
				FlowChef 
				& (0.01,1) & (0.01,1) & (0.1,100) & (1.0,100) & (0.01,10) & (0.01,10) \\
				FlowDPS
				& (0.1,100) & (0.5,75) & (0.1,100) & (0.1,100) & (0.1,100) & (0.01,100) \\
				FlowGrad 
				& (0.01,1.5,80) & (0.01,1.5,80) & (0.01,1.5,80) & (0.01,1.0,80) & (0.01,1.0,80) & (0.01,0.5,100) \\
				OC-Flow 
				& (0.01,0.5,60,0.995) & (0.01,0.5,60,0.995) & (0.01,0.5,60,0.995) & (0.01,0.5,60,0.995) & (0.01,0.5,60,0.995) & (0.01,0.5,80,0.995) \\
				MPC-$\Delta t$ 
				& (15,20,0.1) & (7.5,20,0.1) & (7.5,20,0.1) & (0.5,20,0.1) & (5,20,0.1) & (20,20,0.1) \\
				MPC-RHC ($K=1$) 
				& (0.1,20,0.1) & (0.063,20,0.1) & (1,20,0.1) & (1,20,0.1) & (0.040,20,0.1) & (0.063,20,0.1) \\
				MPC-RHC ($K=3$) 
				& (0.1,20,0.05) & (0.063,20,0.05) & (0.063,20,0.05) & (1,20,0.05) & (1,20,0.05) & (0.063,20,0.05) \\
				\bottomrule
		\end{tabular}}
		\caption{
			Hyperparameter settings across image restoration tasks.
			We report $(\mathrm{lr}, N_{\text{optim}})$ for FlowGrad and for FlowDPS, where $N_{\text{optim}}$ is the number of optimization steps, 
			$(\lambda,\mathrm{lr},\mathrm{max\text{-}iters})$ for FlowGrad,
			$(\lambda,\mathrm{lr},\mathrm{max\text{-}iters},\mathrm{weight\_decay})$ for OC-Flow,
			and $(\lambda, N_{\text{ctrl}},\mathrm{lr})$ for MPC-based methods, where $N_{\text{ctrl}}$ denotes the number of control optimisation steps used to solve the optimisation problem in line 5 of \cref{alg:flow_mpc_rhc,alg:flow_mpc_deltat,alg:flow_mpc_ss}.
		}
		\label{tab:image_restoration_tasks_hyperparams}
	\end{table}
	
	\subsection{MPC-RHC $K=1$ for Linear Inverse Problems}
	A limitation of the single-step approach emerges when applied to linear inverse problems where the forward operator $\mathbf{A}$ has a large null space. Typical examples include inpainting or certain instances of super-resolution, where some pixels are completely unobserved. Consider the case with a quadratic terminal loss $\Phi(\vx) = \frac12 \| \mathbf{A} \vx - \vy \|_2^2$. The MPC-RHC $K=1$ objective function at time $t'$ then reduces to 
	\begin{equation}
		\min_{\vu\in \R^{d}} \; \left\{ \mathcal{J}_{K=1}(\vu) = (1-t') \| \vu \|^2 + \frac{\lambda}{2} \| \mathbf{A}\vx' - \vy \|^2 \right\},  \quad \vx' = \vx + (1-t') \big(v_\vtheta(\vx, t') + \vu\big),
	\end{equation}
	and the gradient w.r.t. the control $\vu$ is given as 
	\begin{align}
		\nabla_{\vu} \mathcal{J}_{K=1}(\vu)  = (1-t')\left[2 \vu + \lambda  \mathbf{A}^T( \mathbf{A} \vx' - \vy) \right]
	\end{align}
	In particular, the second term of the gradient is orthogonal to the null space of $\mathbf{A}$. This means that in directions of the null space the gradient only depends on the quadratic regularisation term. As we initialise the control with zero, components of $\vu$ corresponding to the null space of $\mathbf{A}$ remain zero. Consider the super-resolution tasks. Here, super-resolution is implemented as keeping only every second pixel. In Figure \ref{fig:celeba_images_main} the null space of the operator, corresponding to a pixel grid, is still visible in the MPC-RHC $K\!=\!1$ reconstruction and leads to artifacts. For these applications, the control is seemingly changing the observed pixels too fast and the flow model is not able to properly fill in the missing information.
	
	\begin{figure*}[th]
\centering
\begin{subfigure}{.12\textwidth}
\includegraphics[width=1.0\linewidth]{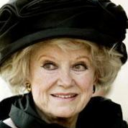}%
\end{subfigure}%
\hfill
\begin{subfigure}{.12\textwidth}
\includegraphics[width=1.0\linewidth]{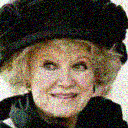}%
\end{subfigure}%
\hfill
\begin{subfigure}{.12\textwidth}
\includegraphics[width=1.0\linewidth]{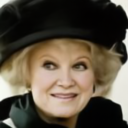}%
\end{subfigure}%
\hfill
\begin{subfigure}{.12\textwidth}
\includegraphics[width=1.0\linewidth]{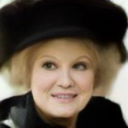}%
\end{subfigure}%
\hfill
\begin{subfigure}{.12\textwidth}
\includegraphics[width=1.0\linewidth]{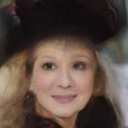}%
\end{subfigure}%
\hfill
\begin{subfigure}{.12\textwidth}
\includegraphics[width=1.0\linewidth]{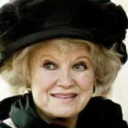}%
\end{subfigure}%
\hfill
\begin{subfigure}{.12\textwidth}
\includegraphics[width=1.0\linewidth]{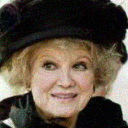}%
\end{subfigure}%
\hfill
\begin{subfigure}{.12\textwidth}
\includegraphics[width=1.0\linewidth]{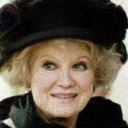}%
\end{subfigure}%
\hfill
\begin{subfigure}{.12\textwidth}
\includegraphics[width=1.0\linewidth]{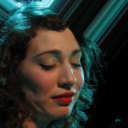}%
\end{subfigure}%
\hfill
\begin{subfigure}{.12\textwidth}
\includegraphics[width=1.0\linewidth]{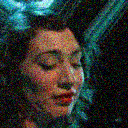}%
\end{subfigure}%
\hfill
\begin{subfigure}{.12\textwidth}
\includegraphics[width=1.0\linewidth]{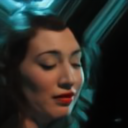}%
\end{subfigure}%
\hfill
\begin{subfigure}{.12\textwidth}
\includegraphics[width=1.0\linewidth]{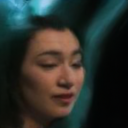}%
\end{subfigure}%
\hfill
\begin{subfigure}{.12\textwidth}
\includegraphics[width=1.0\linewidth]{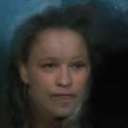}%
\end{subfigure}%
\hfill
\begin{subfigure}{.12\textwidth}
\includegraphics[width=1.0\linewidth]{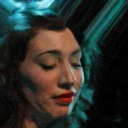}%
\end{subfigure}%
\hfill
\begin{subfigure}{.12\textwidth}
\includegraphics[width=1.0\linewidth]{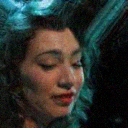}%
\end{subfigure}%
\hfill
\begin{subfigure}{.12\textwidth}
\includegraphics[width=1.0\linewidth]{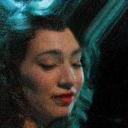}
\end{subfigure}%
\hfill
\begin{subfigure}{.12\textwidth}
\includegraphics[width=1.0\linewidth]{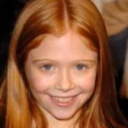}%
\vspace{-5pt}
\captionsetup{justification=centering}
\caption*{\tiny Ground truth}
\end{subfigure}%
\hfill
\begin{subfigure}{.12\textwidth}
\includegraphics[width=1.0\linewidth]{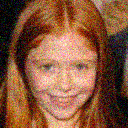}%
\vspace{-5pt}
\captionsetup{justification=centering}
\caption*{\tiny Degraded}
\end{subfigure}%
\hfill
\begin{subfigure}{.12\textwidth}
\includegraphics[width=1.0\linewidth]{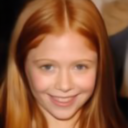}%
\vspace{-5pt}
\captionsetup{justification=centering}
\caption*{\tiny PnP-Flow}
\end{subfigure}%
\hfill
\begin{subfigure}{.12\textwidth}
\includegraphics[width=1.0\linewidth]{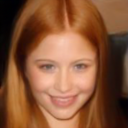}%
\vspace{-5pt}
\captionsetup{justification=centering}
\caption*{\tiny FlowGrad}
\end{subfigure}%
\hfill
\begin{subfigure}{.12\textwidth}
\includegraphics[width=1.0\linewidth]{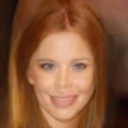}%
\vspace{-5pt}
\captionsetup{justification=centering}
 \caption*{\tiny OC-Flow}
\end{subfigure}%
\hfill
\begin{subfigure}{.12\textwidth}
\includegraphics[width=1.0\linewidth]{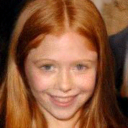}%
\vspace{-5pt}
\captionsetup{justification=centering}
 \caption*{\tiny MPC-$\Delta t$}
\end{subfigure}%
\hfill
\begin{subfigure}{.12\textwidth}
\includegraphics[width=1.0\linewidth]{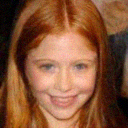}%
\vspace{-5pt}
\captionsetup{justification=centering}
 \caption*{\tiny MPC-RHC (K=1)}
\end{subfigure}%
\hfill
\begin{subfigure}{.12\textwidth}
\includegraphics[width=1.0\linewidth]{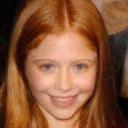}%
\vspace{-5pt}
\captionsetup{justification=centering}
 \caption*{\tiny MPC-RHC (K=3)}
\end{subfigure}%
\caption{Qualitative results on the CelebA dataset for the denoising task with noise level $\sigma = 0.2$.} \label{fig:celeba_images_denoising}
\end{figure*}

	\begin{figure*}[th]
\centering
\begin{subfigure}{.12\textwidth}
\includegraphics[width=1.0\linewidth]{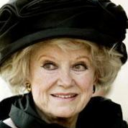}%
\end{subfigure}%
\hfill
\begin{subfigure}{.12\textwidth}
\includegraphics[width=1.0\linewidth]{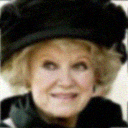}%
\end{subfigure}%
\hfill
\begin{subfigure}{.12\textwidth}
\includegraphics[width=1.0\linewidth]{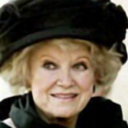}%
\end{subfigure}%
\hfill
\begin{subfigure}{.12\textwidth}
\includegraphics[width=1.0\linewidth]{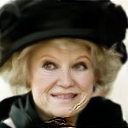}%
\end{subfigure}%
\hfill
\begin{subfigure}{.12\textwidth}
\includegraphics[width=1.0\linewidth]{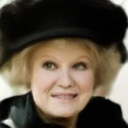}%
\end{subfigure}%
\hfill
\begin{subfigure}{.12\textwidth}
\includegraphics[width=1.0\linewidth]{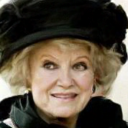}%
\end{subfigure}%
\hfill
\begin{subfigure}{.12\textwidth}
\includegraphics[width=1.0\linewidth]{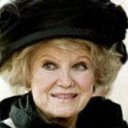}%
\end{subfigure}%
\hfill
\begin{subfigure}{.12\textwidth}
\includegraphics[width=1.0\linewidth]{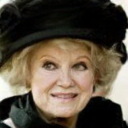}%
\end{subfigure}%
\hfill
\begin{subfigure}{.12\textwidth}
\includegraphics[width=1.0\linewidth]{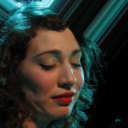}%
\end{subfigure}%
\hfill
\begin{subfigure}{.12\textwidth}
\includegraphics[width=1.0\linewidth]{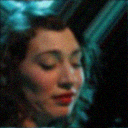}%
\end{subfigure}%
\hfill
\begin{subfigure}{.12\textwidth}
\includegraphics[width=1.0\linewidth]{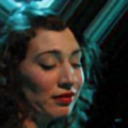}%
\end{subfigure}%
\hfill
\begin{subfigure}{.12\textwidth}
\includegraphics[width=1.0\linewidth]{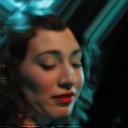}%
\end{subfigure}%
\hfill
\begin{subfigure}{.12\textwidth}
\includegraphics[width=1.0\linewidth]{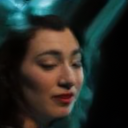}%
\end{subfigure}%
\hfill
\begin{subfigure}{.12\textwidth}
\includegraphics[width=1.0\linewidth]{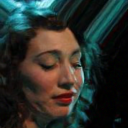}%
\end{subfigure}%
\hfill
\begin{subfigure}{.12\textwidth}
\includegraphics[width=1.0\linewidth]{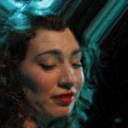}%
\end{subfigure}%
\hfill
\begin{subfigure}{.12\textwidth}
\includegraphics[width=1.0\linewidth]{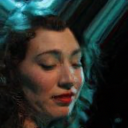}
\end{subfigure}%
\hfill
\begin{subfigure}{.12\textwidth}
\includegraphics[width=1.0\linewidth]{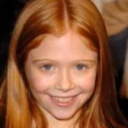}%
\vspace{-5pt}
\captionsetup{justification=centering}
\caption*{\tiny Ground truth}
\end{subfigure}%
\hfill
\begin{subfigure}{.12\textwidth}
\includegraphics[width=1.0\linewidth]{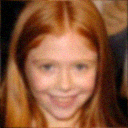}%
\vspace{-5pt}
\captionsetup{justification=centering}
\caption*{\tiny Degraded}
\end{subfigure}%
\hfill
\begin{subfigure}{.12\textwidth}
\includegraphics[width=1.0\linewidth]{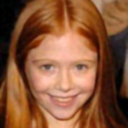}%
\vspace{-5pt}
\captionsetup{justification=centering}
\caption*{\tiny PnP-Flow}
\end{subfigure}%
\hfill
\begin{subfigure}{.12\textwidth}
\includegraphics[width=1.0\linewidth]{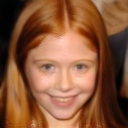}%
\vspace{-5pt}
\captionsetup{justification=centering}
\caption*{\tiny FlowGrad}
\end{subfigure}%
\hfill
\begin{subfigure}{.12\textwidth}
\includegraphics[width=1.0\linewidth]{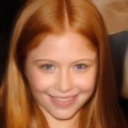}%
\vspace{-5pt}
\captionsetup{justification=centering}
 \caption*{\tiny OC-Flow}
\end{subfigure}%
\hfill
\begin{subfigure}{.12\textwidth}
\includegraphics[width=1.0\linewidth]{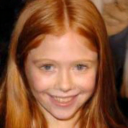}%
\vspace{-5pt}
\captionsetup{justification=centering}
 \caption*{\tiny MPC-$\Delta t$}
\end{subfigure}%
\hfill
\begin{subfigure}{.12\textwidth}
\includegraphics[width=1.0\linewidth]{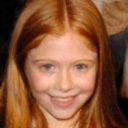}%
\vspace{-5pt}
\captionsetup{justification=centering}
 \caption*{\tiny MPC-RHC (K=1)}
\end{subfigure}%
\hfill
\begin{subfigure}{.12\textwidth}
\includegraphics[width=1.0\linewidth]{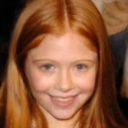}%
\vspace{-5pt}
\captionsetup{justification=centering}
 \caption*{\tiny MPC-RHC (K=3)}
\end{subfigure}%
\caption{Qualitative results on the CelebA dataset for the image deblurring task with noise level $\sigma = 0.05$ and std of blurring Gaussian kernel set to $\sigma_{b}=1.0$.}
\label{fig:celeba_images_deblurring}
\end{figure*}

	\begin{figure*}[th]
\centering
\begin{subfigure}{.12\textwidth}
\includegraphics[width=1.0\linewidth]{imgs/celeba/fig_grid_superresolution_2x/superresolution_clean_batch0.png}%
\end{subfigure}%
\hfill
\begin{subfigure}{.12\textwidth}
\includegraphics[width=1.0\linewidth]{imgs/celeba/fig_grid_superresolution_2x/superresolution_noisy_batch0.png}%
\end{subfigure}%
\hfill
\begin{subfigure}{.12\textwidth}
\includegraphics[width=1.0\linewidth]{imgs/celeba/fig_grid_superresolution_2x/superresolution_pnp_flow_batch0.png}%
\end{subfigure}%
\hfill
\begin{subfigure}{.12\textwidth}
\includegraphics[width=1.0\linewidth]{imgs/celeba/fig_grid_superresolution_2x/superresolution_flow_grad_batch0.png}%
\end{subfigure}%
\hfill
\begin{subfigure}{.12\textwidth}
\includegraphics[width=1.0\linewidth]{imgs/celeba/fig_grid_superresolution_2x/superresolution_oc_flow_batch0.png}%
\end{subfigure}%
\hfill
\begin{subfigure}{.12\textwidth}
\includegraphics[width=1.0\linewidth]{imgs/celeba/fig_grid_superresolution_2x/superresolution_mpc_flow_delta_t_batch0.png}%
\end{subfigure}%
\hfill
\begin{subfigure}{.12\textwidth}
\includegraphics[width=1.0\linewidth]{imgs/celeba/fig_grid_superresolution_2x/superresolution_mpc_flow_batch0.png}%
\end{subfigure}%
\hfill
\begin{subfigure}{.12\textwidth}
\includegraphics[width=1.0\linewidth]{imgs/celeba/fig_grid_superresolution_2x/superresolution_mpc_flow_ms_batch0.png}%
\end{subfigure}%
\hfill
\begin{subfigure}{.12\textwidth}
\includegraphics[width=1.0\linewidth]{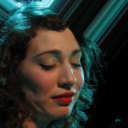}%
\end{subfigure}%
\hfill
\begin{subfigure}{.12\textwidth}
\includegraphics[width=1.0\linewidth]{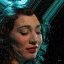}%
\end{subfigure}%
\hfill
\begin{subfigure}{.12\textwidth}
\includegraphics[width=1.0\linewidth]{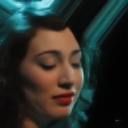}%
\end{subfigure}%
\hfill
\begin{subfigure}{.12\textwidth}
\includegraphics[width=1.0\linewidth]{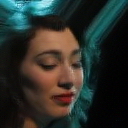}%
\end{subfigure}%
\hfill
\begin{subfigure}{.12\textwidth}
\includegraphics[width=1.0\linewidth]{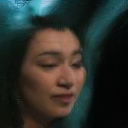}%
\end{subfigure}%
\hfill
\begin{subfigure}{.12\textwidth}
\includegraphics[width=1.0\linewidth]{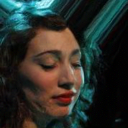}%
\end{subfigure}%
\hfill
\begin{subfigure}{.12\textwidth}
\includegraphics[width=1.0\linewidth]{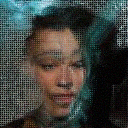}%
\end{subfigure}%
\hfill
\begin{subfigure}{.12\textwidth}
\includegraphics[width=1.0\linewidth]{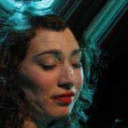}
\end{subfigure}%
\hfill
\begin{subfigure}{.12\textwidth}
\includegraphics[width=1.0\linewidth]{imgs/celeba/fig_grid_superresolution_2x/superresolution_clean_batch2.png}%
\vspace{-5pt}
\captionsetup{justification=centering}
\caption*{\tiny Ground truth}
\end{subfigure}%
\hfill
\begin{subfigure}{.12\textwidth}
\includegraphics[width=1.0\linewidth]{imgs/celeba/fig_grid_superresolution_2x/superresolution_noisy_batch2.png}%
\vspace{-5pt}
\captionsetup{justification=centering}
\caption*{\tiny Degraded}
\end{subfigure}%
\hfill
\begin{subfigure}{.12\textwidth}
\includegraphics[width=1.0\linewidth]{imgs/celeba/fig_grid_superresolution_2x/superresolution_pnp_flow_batch2.png}%
\vspace{-5pt}
\captionsetup{justification=centering}
\caption*{\tiny PnP-Flow}
\end{subfigure}%
\hfill
\begin{subfigure}{.12\textwidth}
\includegraphics[width=1.0\linewidth]{imgs/celeba/fig_grid_superresolution_2x/superresolution_flow_grad_batch2.png}%
\vspace{-5pt}
\captionsetup{justification=centering}
\caption*{\tiny FlowGrad}
\end{subfigure}%
\hfill
\begin{subfigure}{.12\textwidth}
\includegraphics[width=1.0\linewidth]{imgs/celeba/fig_grid_superresolution_2x/superresolution_oc_flow_batch2.png}%
\vspace{-5pt}
\captionsetup{justification=centering}
 \caption*{\tiny OC-Flow}
\end{subfigure}%
\hfill
\begin{subfigure}{.12\textwidth}
\includegraphics[width=1.0\linewidth]{imgs/celeba/fig_grid_superresolution_2x/superresolution_mpc_flow_delta_t_batch2.png}%
\vspace{-5pt}
\captionsetup{justification=centering}
 \caption*{\tiny MPC-$\Delta t$}
\end{subfigure}%
\hfill
\begin{subfigure}{.12\textwidth}
\includegraphics[width=1.0\linewidth]{imgs/celeba/fig_grid_superresolution_2x/superresolution_mpc_flow_batch2.png}%
\vspace{-5pt}
\captionsetup{justification=centering}
 \caption*{\tiny MPC-RHC (K=1)}
\end{subfigure}%
\hfill
\begin{subfigure}{.12\textwidth}
\includegraphics[width=1.0\linewidth]{imgs/celeba/fig_grid_superresolution_2x/superresolution_mpc_flow_ms_batch2.png}%
\vspace{-5pt}
\captionsetup{justification=centering}
 \caption*{\tiny MPC-RHC (K=3)}
\end{subfigure}%
\caption{Qualitative results on the CelebA dataset for the image super-resolution task with noise level $\sigma = 0.05$ and $\times 2$ upscaling.}
\label{fig:celeba_images_superres}
\end{figure*}

	\begin{figure*}[th]
\centering
\begin{subfigure}{.12\textwidth}
\includegraphics[width=\linewidth]{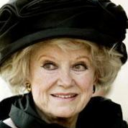}
\end{subfigure}\hfill
\begin{subfigure}{.12\textwidth}
\includegraphics[width=\linewidth]{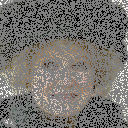}
\end{subfigure}\hfill
\begin{subfigure}{.12\textwidth}
\includegraphics[width=\linewidth]{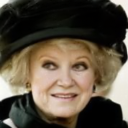}
\end{subfigure}\hfill
\begin{subfigure}{.12\textwidth}
\includegraphics[width=\linewidth]{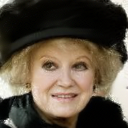}
\end{subfigure}\hfill
\begin{subfigure}{.12\textwidth}
\includegraphics[width=\linewidth]{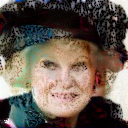}
\end{subfigure}\hfill
\begin{subfigure}{.12\textwidth}
\includegraphics[width=\linewidth]{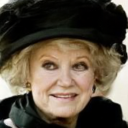}
\end{subfigure}\hfill
\begin{subfigure}{.12\textwidth}
\includegraphics[width=\linewidth]{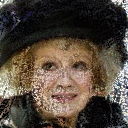}
\end{subfigure}\hfill
\begin{subfigure}{.12\textwidth}
\includegraphics[width=\linewidth]{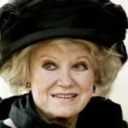}
\end{subfigure}
\medskip
\begin{subfigure}{.12\textwidth}
\includegraphics[width=\linewidth]{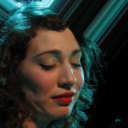}
\end{subfigure}\hfill
\begin{subfigure}{.12\textwidth}
\includegraphics[width=\linewidth]{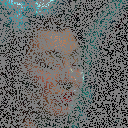}
\end{subfigure}\hfill
\begin{subfigure}{.12\textwidth}
\includegraphics[width=\linewidth]{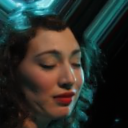}
\end{subfigure}\hfill
\begin{subfigure}{.12\textwidth}
\includegraphics[width=\linewidth]{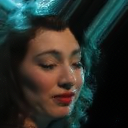}
\end{subfigure}\hfill
\begin{subfigure}{.12\textwidth}
\centering
\failedcell{imgs/celeba/fig_grid_random_inpainting/_noisy_batch0.png}
\end{subfigure}%
\hfill
\begin{subfigure}{.12\textwidth}
\includegraphics[width=\linewidth]{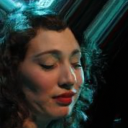}
\end{subfigure}\hfill
\begin{subfigure}{.12\textwidth}
\includegraphics[width=\linewidth]{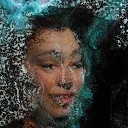}
\end{subfigure}\hfill
\begin{subfigure}{.12\textwidth}
\includegraphics[width=\linewidth]{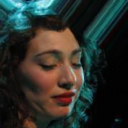}
\end{subfigure}
\medskip
\begin{subfigure}{.12\textwidth}
\includegraphics[width=\linewidth]{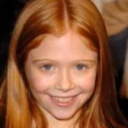}
\vspace{-5pt}\caption*{\tiny Ground truth}
\end{subfigure}\hfill
\begin{subfigure}{.12\textwidth}
\includegraphics[width=\linewidth]{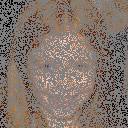}
\vspace{-5pt}\caption*{\tiny Degraded}
\end{subfigure}\hfill
\begin{subfigure}{.12\textwidth}
\includegraphics[width=\linewidth]{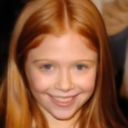}
\vspace{-5pt}\caption*{\tiny PnP-Flow}
\end{subfigure}\hfill
\begin{subfigure}{.12\textwidth}
\includegraphics[width=\linewidth]{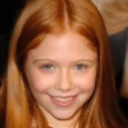}
\vspace{-5pt}\caption*{\tiny FlowGrad}
\end{subfigure}\hfill
\begin{subfigure}{.12\textwidth}
\centering
\failedcell{imgs/celeba/fig_grid_random_inpainting/_noisy_batch2.png}
\vspace{-5pt}
\caption*{\tiny OC-Flow}
\end{subfigure}%
\hfill
\begin{subfigure}{.12\textwidth}
\includegraphics[width=\linewidth]{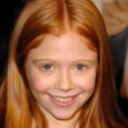}
\vspace{-5pt}\caption*{\tiny MPC-$\Delta t$}
\end{subfigure}\hfill
\begin{subfigure}{.12\textwidth}
\includegraphics[width=\linewidth]{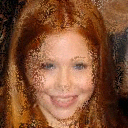}
\vspace{-5pt}\caption*{\tiny MPC-RHC (K=1)}
\end{subfigure}\hfill
\begin{subfigure}{.12\textwidth}
\includegraphics[width=\linewidth]{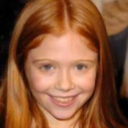}
\vspace{-5pt}\caption*{\tiny MPC-RHC (K=3)}
\end{subfigure}

\caption{
Qualitative results on the CelebA dataset for the random image inpainting task with noise level $\sigma = 0.05$ and $70\%$ random masking.
OC-Flow fails to produce meaningful restorations and is therefore indicated as failed.
}
\label{fig:celeba_images_random_inpainting}
\end{figure*}
	\begin{figure*}[th]
\centering
\begin{subfigure}{.12\textwidth}
\includegraphics[width=1.0\linewidth]{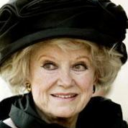}%
\end{subfigure}%
\hfill
\begin{subfigure}{.12\textwidth}
\includegraphics[width=1.0\linewidth]{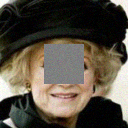}%
\end{subfigure}%
\hfill
\begin{subfigure}{.12\textwidth}
\includegraphics[width=1.0\linewidth]{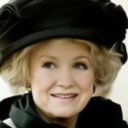}%
\end{subfigure}%
\hfill
\begin{subfigure}{.12\textwidth}
\includegraphics[width=1.0\linewidth]{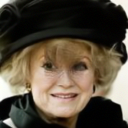}%
\end{subfigure}%
\hfill
\begin{subfigure}{.12\textwidth}
\includegraphics[width=1.0\linewidth]{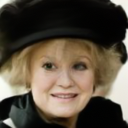}%
\end{subfigure}%
\hfill
\begin{subfigure}{.12\textwidth}
\includegraphics[width=1.0\linewidth]{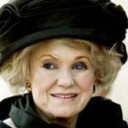}%
\end{subfigure}%
\hfill
\begin{subfigure}{.12\textwidth}
\includegraphics[width=1.0\linewidth]{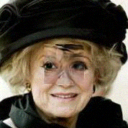}%
\end{subfigure}%
\hfill
\begin{subfigure}{.12\textwidth}
\includegraphics[width=1.0\linewidth]{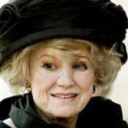}%
\end{subfigure}%
\hfill
\begin{subfigure}{.12\textwidth}
\includegraphics[width=1.0\linewidth]{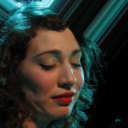}%
\end{subfigure}%
\hfill
\begin{subfigure}{.12\textwidth}
\includegraphics[width=1.0\linewidth]{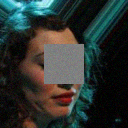}%
\end{subfigure}%
\hfill
\begin{subfigure}{.12\textwidth}
\includegraphics[width=1.0\linewidth]{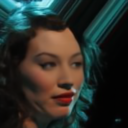}%
\end{subfigure}%
\hfill
\begin{subfigure}{.12\textwidth}
\includegraphics[width=1.0\linewidth]{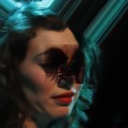}%
\end{subfigure}%
\hfill
\begin{subfigure}{.12\textwidth}
\includegraphics[width=1.0\linewidth]{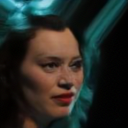}%
\end{subfigure}%
\hfill
\begin{subfigure}{.12\textwidth}
\includegraphics[width=1.0\linewidth]{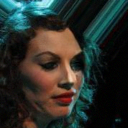}%
\end{subfigure}%
\hfill
\begin{subfigure}{.12\textwidth}
\includegraphics[width=1.0\linewidth]{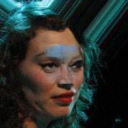}%
\end{subfigure}%
\hfill
\begin{subfigure}{.12\textwidth}
\includegraphics[width=1.0\linewidth]{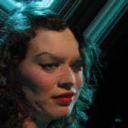}
\end{subfigure}%
\hfill
\begin{subfigure}{.12\textwidth}
\includegraphics[width=1.0\linewidth]{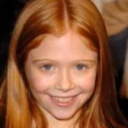}%
\vspace{-5pt}
\captionsetup{justification=centering}
\caption*{\tiny Ground truth}
\end{subfigure}%
\hfill
\begin{subfigure}{.12\textwidth}
\includegraphics[width=1.0\linewidth]{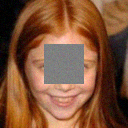}%
\vspace{-5pt}
\captionsetup{justification=centering}
\caption*{\tiny Degraded}
\end{subfigure}%
\hfill
\begin{subfigure}{.12\textwidth}
\includegraphics[width=1.0\linewidth]{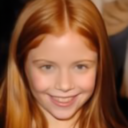}%
\vspace{-5pt}
\captionsetup{justification=centering}
\caption*{\tiny PnP-Flow}
\end{subfigure}%
\hfill
\begin{subfigure}{.12\textwidth}
\includegraphics[width=1.0\linewidth]{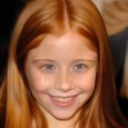}%
\vspace{-5pt}
\captionsetup{justification=centering}
\caption*{\tiny FlowGrad}
\end{subfigure}%
\hfill
\begin{subfigure}{.12\textwidth}
\includegraphics[width=1.0\linewidth]{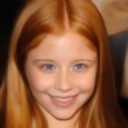}%
\vspace{-5pt}
\captionsetup{justification=centering}
 \caption*{\tiny OC-Flow}
\end{subfigure}%
\hfill
\begin{subfigure}{.12\textwidth}
\includegraphics[width=1.0\linewidth]{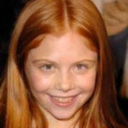}%
\vspace{-5pt}
\captionsetup{justification=centering}
 \caption*{\tiny MPC-$\Delta t$}
\end{subfigure}%
\hfill
\begin{subfigure}{.12\textwidth}
\includegraphics[width=1.0\linewidth]{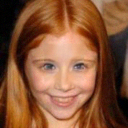}%
\vspace{-5pt}
\captionsetup{justification=centering}
 \caption*{\tiny MPC-RHC (K=1)}
\end{subfigure}%
\hfill
\begin{subfigure}{.12\textwidth}
\includegraphics[width=1.0\linewidth]{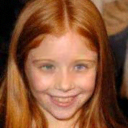}%
\vspace{-5pt}
\captionsetup{justification=centering}
 \caption*{\tiny MPC-RHC (K=3)}
\end{subfigure}%
\caption{
Qualitative results on the CelebA dataset for the box inpainting task with noise level $\sigma = 0.05$, and a central missing region of size $40 \times 40$ pixels.
} \label{fig:celeba_images_box_inpainting}
\end{figure*}

	\begin{figure*}[th]
\centering

\begin{subfigure}{.12\textwidth}\includegraphics[width=\linewidth]{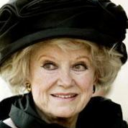}\end{subfigure}\hfill
\begin{subfigure}{.12\textwidth}\includegraphics[width=\linewidth]{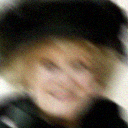}\end{subfigure}\hfill
\begin{subfigure}{.12\textwidth}\includegraphics[width=\linewidth]{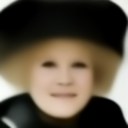}\end{subfigure}\hfill
\begin{subfigure}{.12\textwidth}\failedcell{imgs/celeba/fig_grid_random_inpainting/_noisy_batch0.png}
\end{subfigure}\hfill
\begin{subfigure}{.12\textwidth}\includegraphics[width=\linewidth]{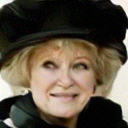}\end{subfigure}\hfill
\begin{subfigure}{.12\textwidth}\includegraphics[width=\linewidth]{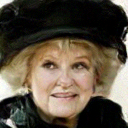}\end{subfigure}\hfill
\begin{subfigure}{.12\textwidth}\includegraphics[width=\linewidth]{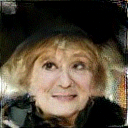}\end{subfigure}\hfill
\begin{subfigure}{.12\textwidth}\includegraphics[width=\linewidth]{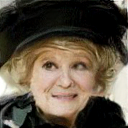}\end{subfigure}

\vspace{2pt}\par

\begin{subfigure}{.12\textwidth}\includegraphics[width=\linewidth]{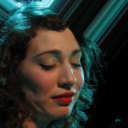}\end{subfigure}\hfill
\begin{subfigure}{.12\textwidth}\includegraphics[width=\linewidth]{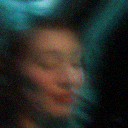}\end{subfigure}\hfill
\begin{subfigure}{.12\textwidth}\includegraphics[width=\linewidth]{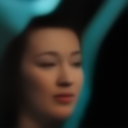}\end{subfigure}\hfill
\begin{subfigure}{.12\textwidth}\includegraphics[width=\linewidth]{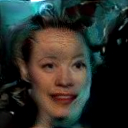}\end{subfigure}\hfill
\begin{subfigure}{.12\textwidth}\includegraphics[width=\linewidth]{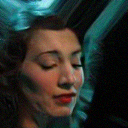}\end{subfigure}\hfill
\begin{subfigure}{.12\textwidth}\includegraphics[width=\linewidth]{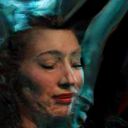}\end{subfigure}\hfill
\begin{subfigure}{.12\textwidth}\includegraphics[width=\linewidth]{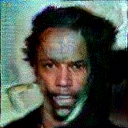}\end{subfigure}\hfill
\begin{subfigure}{.12\textwidth}\includegraphics[width=\linewidth]{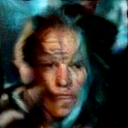}\end{subfigure}

\vspace{2pt}\par

\begin{subfigure}{.12\textwidth}
  \centering
  \includegraphics[width=\linewidth]{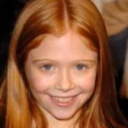}\\[-4pt]
  {\tiny Ground truth}
\end{subfigure}\hfill
\begin{subfigure}{.12\textwidth}
  \centering
  \includegraphics[width=\linewidth]{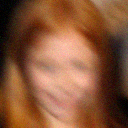}\\[-4pt]
  {\tiny Degraded}
\end{subfigure}\hfill
\begin{subfigure}{.12\textwidth}
  \centering
  \includegraphics[width=\linewidth]{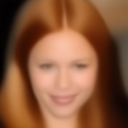}\\[-4pt]
  {\tiny PnP-Flow}
\end{subfigure}\hfill
\begin{subfigure}{.12\textwidth}
  \centering
  \includegraphics[width=\linewidth]{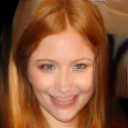}\\[-4pt]
  {\tiny FlowGrad}
\end{subfigure}\hfill
\begin{subfigure}{.12\textwidth}
  \centering
  \includegraphics[width=\linewidth]{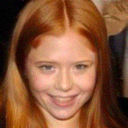}\\[-4pt]
  {\tiny D-Flow}
\end{subfigure}\hfill
\begin{subfigure}{.12\textwidth}
  \centering
  \includegraphics[width=\linewidth]{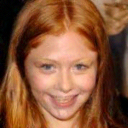}\\[-4pt]
  {\tiny MPC-$\Delta t$}
\end{subfigure}\hfill
\begin{subfigure}{.12\textwidth}
  \centering
  \includegraphics[width=\linewidth]{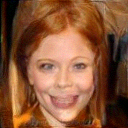}\\[-4pt]
  {\tiny MPC-RHC ($K=1$)}
\end{subfigure}\hfill
\begin{subfigure}{.12\textwidth}
  \centering
  \includegraphics[width=\linewidth]{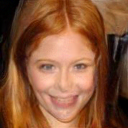}\\[-4pt]
  {\tiny MPC-RHC ($K=3$)}
\end{subfigure}

\caption{
Qualitative results on the CelebA dataset for the nonlinear blur task with noise level $\sigma = 0.05$.
}
\label{fig:celeba_images_non_linear_deblurring}
\end{figure*}


	\renewcommand{\thefigure}{F\origthefigureF}
	\setcounter{figure}{0}
	
	\section{Image generation with Flux.2}\label{app:flux}
	
	\paragraph{Style loss explicit definition}
	
	We define \[ \Phi(\vz) \;=\; \bigl\| \mathrm{Style}(\text{Dec}(\vz)) - \mathrm{Style}(\vx_{\mathrm{ref}}) \bigr\|_2^2, \] where the $\mathrm{Style}$ operator uses CLIP ViT-B/16 \cite{radford2021learning} image features. Let $F(\vx)\in\mathbb{R}^{N\times d}$ denote the patch features extracted from the image encoder, and define the Gram matrix $G(\vx) \;=\; F(\vx)^\top F(\vx)$. We then set $\mathrm{Style}(\vx)=G(\vx)$ and take $\vx_{\mathrm{ref}}$ to be the reference style image.
	
	\paragraph{Super-resolution}
	
	Figure \ref{fig:super_res_images_flux} shows example images for the 8x super-resolution task using MPC-Flow and FlowChef with FLUX.2.
	
	\begin{figure}
		\centering
		\includegraphics[width=\linewidth]{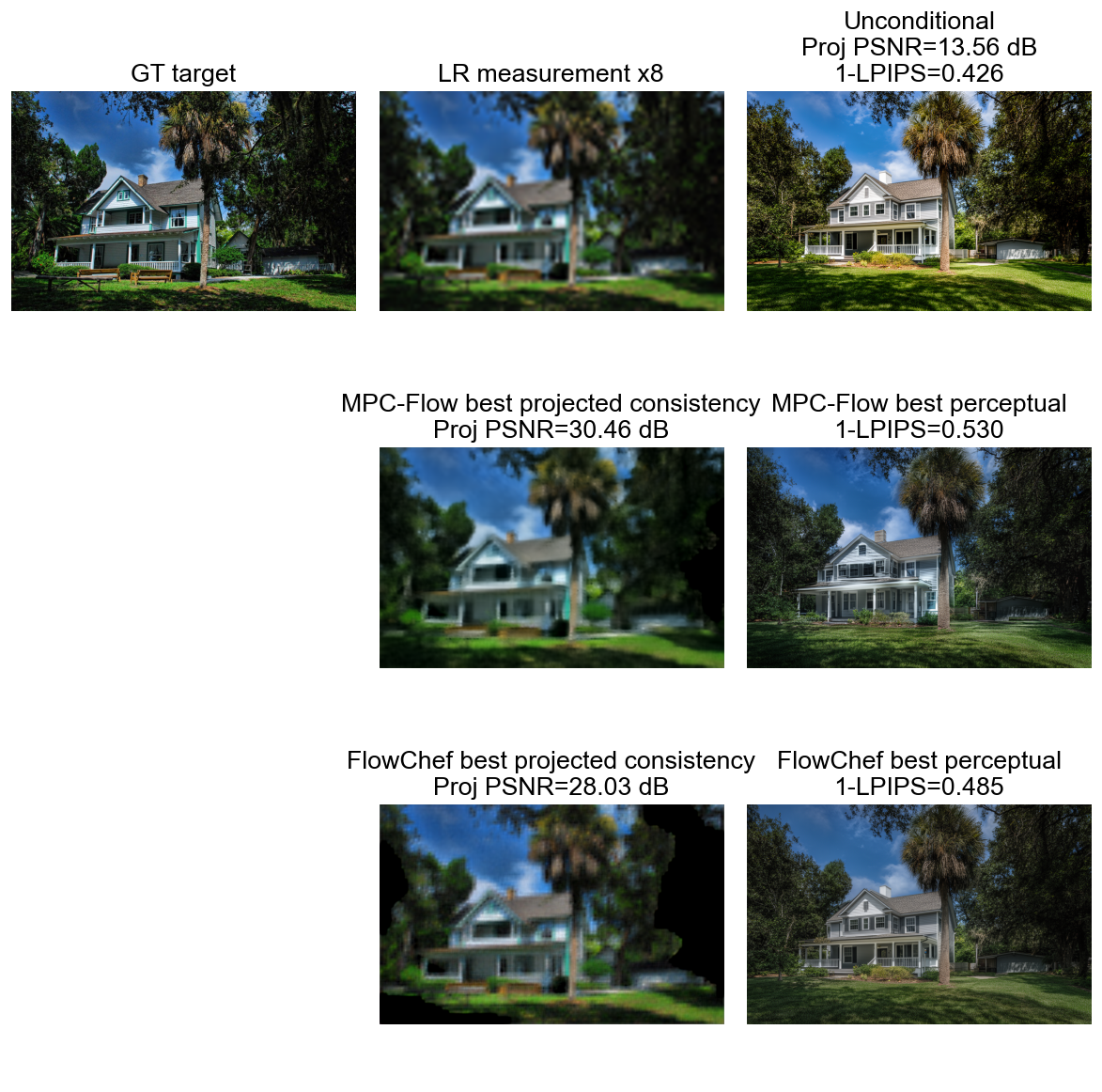}
		\caption{Example images for 8x super-resolution task with FLUX.2. Proj PSNR refers to the PSNR of the low-resolution image as measured in the high-resolution space (by upsampling with the adjoint of the downsampling operator).}
		\label{fig:super_res_images_flux}
	\end{figure}

	\paragraph{Hyperparameter settings}
	
	We used FLUX.2's default image-generation settings: 28 denoising steps and a guidance scale of 4. For FlowChef, we implemented Algorithm~1 from \citet{patel2024steering}.
	
	For both FlowChef and MPC-Flow, we fixed a budget of 20 inference steps per denoising step. In FlowChef, the strength of conditioning was controlled via the learning rate on the conditioning objective, which was varied from $1\times10^{-4}$ to $5\times10^{-1}$. In MPC-Flow, rather than scaling the terminal objective directly, we rewrote the control objective
	\[ \int_t^1 \|\vu(\tau)\|_2^2 \, d\tau + \lambda\, \Phi(\vx(1)) \]
	into the equivalent form
	\[ \int_t^1 \rho\, \|\vu(\tau)\|_2^2 \, d\tau + \Phi(\vx(1)). \]
	This reparameterisation isolates the trade-off between control regularisation and terminal quality into a single path-regularisation parameter $\rho$, yielding a numerically more stable tuning knob.
	
	The choice of learning rate and $\rho$ depended on the form and scale of the terminal cost $\Phi$. In the style transfer experiments, $\Phi$ was defined as a perceptual loss computed through a neural network, whereas in the luminance (recoloring) experiments, $\Phi$ measured pixel-wise consistency between the decoded image and the target luminance image.
	
	For style transfer, we varied $\rho$ from $4.0$ to $1024.0$ and used a fixed learning rate of $0.5$. For recoloring, we varied $\rho$ from $1\times10^{-8}$ to $5\times10^{-4}$ and used a learning rate of $0.2$.
	
	The images in Figure \ref{fig:style_transfer_example_grid} were generated with different prompts and reference style images for our method $\rho = 8.0 $ for our method and learning rate = $0.025$ for FlowChef.

	\begin{figure}
		\centering
		\includegraphics[width=1\linewidth]{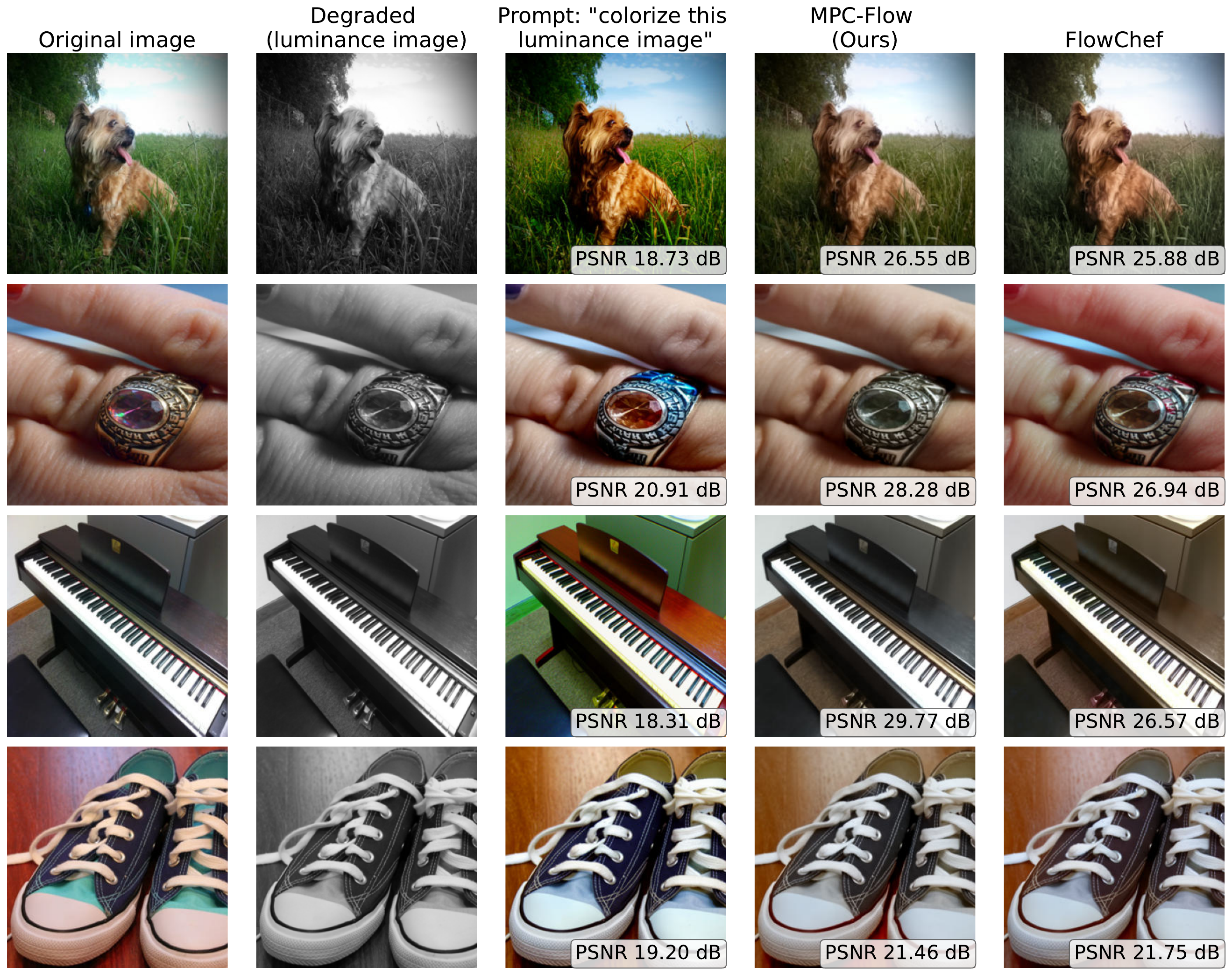}
		\caption{Further examples of image recolouration with Flux.2 and MPC-Flow.}
		\label{fig:recoloration_extra_examples}
	\end{figure}

	\paragraph{Model origin}
	
	We used the 4-bit quantised model Flux.2 [dev] \footnote{\url{https://huggingface.co/diffusers/FLUX.2-dev-bnb-4bit}} obtained from the Diffusers Python package \cite{von-platen-etal-2022-diffusers}. We make use of the Flux.2 [dev] model under the terms of the FLUX [dev] Non-Commercial License v2.0.
	
	\paragraph{Memory usage}
	Inference-time conditioning can be run within tight GPU memory budgets by keeping heavyweight components off‑GPU when they aren’t needed. The components of interest are: text encoder (14.7 GB), latent space decoder (0.17 GB), Flux transformer (17.3 GB, quantised to 4 bits). In our setup, we move the text encoder to GPU to compute prompt embeddings, and then back to CPU. The Flux transformer and VAE are then moved to GPU for the denoising and control steps. This staged execution, combined with prompt‑embedding caching and aggressive cache clearing between runs, keeps peak VRAM bounded below 24 GB even when using inference‑time rewards.
	
	\paragraph{Dataset construction}
	For style transfer, we used five prompts ("a cat", "a landscape", "a portrait", "a dog" and "two lizards wearing fedoras") paired with each of nine style images to construct a set of 45 style transfer tasks that we evaluated methods on. Images were generated at $448 \times 448$ resolution.
	
	For image colouration, we used ten images sourced from Openverse \footnote{https://openverse.org/}, which were cropped and resized to $256 \times 256$ resolution.
	
	In Figure \ref{fig:recoloration_extra_examples}, "lisa-the-dog" by roman.schurte is marked with CC0 1.0; "High School Ring" by slgckgc is licensed under CC BY 2.0; "Piano" by Sean MacEntee is licensed under CC BY 2.0; "shoes" by kats\_stock\_photos is licensed under CC BY 2.0; "The Iconic Image" by Andy Morffew is licensed under CC BY 2.0.
	-
	\begin{algorithm}[t]
		\caption{Single-Step RHC}
		\label{alg:flow_mpc_ss}
		\begin{algorithmic}[1]
			\INPUT Pretrained flow model $v_\vtheta(\vx,t)$, terminal loss $\Phi(\vx)$, initial state $\vx_0$, control step size $\delta$
			\STATE $t' \gets 0$, \quad $\vx \gets \vx_0$
			\WHILE{$t' < 1$}
			\STATE Initialise control $\vu$ (e.g., warm-start)
			\STATE \textbf{solve:}
			\STATE \hspace{1em} $\displaystyle \vu^* = \arg\min_\vu \left[  (1-t') \|\vu\|^2 + \lambda \Phi(\vx') \right]$
			\STATE \hspace{1em} with: $\vx' = \vx + (1-t') \big(v_\vtheta(\vx, t') + \vu\big)$
			\STATE {\color{gray} Apply control $\vu_0^*$ on interval $[t', t' + \delta]$}
			\STATE Update state: $\vx \gets \vx + \delta(v_\vtheta(\vx, t') + \vu^*)$ 
			\STATE Update time: $t' \gets t' + \delta$
			\ENDWHILE
		\end{algorithmic}
	\end{algorithm}
	
	\section{Image Restoration with Stable Diffusion 3.0}\label{app:sd30}
	
	To demonstrate the flexibility of our framework across latent flow models, we qualitatively compare MPC-Flow with $K=1$ against FlowDPS~\citep{kim2025flowdps} and FlowChef~\citep{patel2025flowchef}, using Stable Diffusion 3.0 (SD3.0) \citep{esser2024scaling}.
	We evaluate the methods on three samples drawn from the DIV2K train set, the AFHQ train set, and the FFHQ train set, respectively, using the same examples provided in the official FlowDPS repository.\footnote{\url{https://github.com/FlowDPS-Inverse/FlowDPS}} 
	Following FlowDPS, we consider three inverse problems: (i) super-resolution (SR) from average pooling with a scale factor of $12$, (ii) SR from bicubic interpolation with a scale factor of $12$, and (iii) Gaussian deblurring with a kernel size of $61$ and standard deviation $3.0$. 
	The resulting qualitative comparisons are shown in Figure~\ref{fig:sd30_comparisons}.
	
	\begin{figure*}[t]
		\centering
		\includegraphics[
		width=1\linewidth,
		trim={0 0 0 0},
		clip
		]{imgs/comparison_zoom_SD3.png}
		\caption{
			Qualitative comparison of MPC-Flow with $K=1$ against FlowDPS~\citep{kim2025flowdps} and FlowChef~\citep{patel2025flowchef}. The first, second, and third rows correspond to samples from the DIV2K train set, AFHQ train set, and FFHQ train set, respectively. Following the FlowDPS evaluation setup, the three tasks are: super-resolution from average pooling with scale factor $12$, super-resolution from bicubic interpolation with scale factor $12$, and Gaussian deblurring.
		}
		\label{fig:sd30_comparisons}
	\end{figure*}
	
	
	\section{Interpretation as a Regulariser for Inverse Problems}
	\label{app:regularisation}
	Let us consider inverse problems with additive Gaussian noise, i.e.,
	\begin{align*}
		\vy = \mathcal{A}(\vx) + \vepsilon.
	\end{align*}
	In this case, we can choose the terminal loss function as $\Phi(\vx) = \| \mathcal{A}(\vx) - \vy \|^2$, measuring the consistency of the reconstruction $\vx$ to the observations $\vy$. Let us consider the global optimal problem \eqref{eq:flow_optimal_control}, which we re-state here as 
	\begin{equation*}
		\begin{split}
			&\min_{\vu} \left\{ \mathcal{J}_\rho(\vu) = \| \mathcal{A}((\vx(1)) - \vy \|^2 + \rho  \int_0^1 \| \vu(\tau) \|_2^2 d\tau  \right\} \\ 
			\text{s.t.}& \quad d\vx(t)/dt = v_\vtheta(\vx(t),t) + \vu(t) , \\ &\quad t \in [0,1], \vx(0) = \vx_0,
		\end{split}
	\end{equation*}
	with $\rho=1/\lambda$. We can now interpret this as a classical regularisation functional in the variational regularisation framework \cite{EnglHankeNeubauer1996}. For this, we consider the objective function
	\begin{align}
		\argmin_{\vx_1}  \| \mathcal{A}(\vx_1) - \vy \|^2 + \rho R(\vx_1),
	\end{align}
	with $R$ defined as the solution to a constrained optimal control problem
	\begin{align}
		R(\vx_1) &= \inf_\vu \left\{ \int_0^1 \| \vu(\tau) \|^2d\tau \bigg|  \frac{d \vx_t}{dt} = v_\vtheta(\vx_t, t) + \vu(t), \quad \vx(0) = \vx_0, \quad \vx(1) = \vx_1 \right\}. 
	\end{align}
	This can be interpreted as a type of dynamic prior, which penalises $\vx_1$ if it is not \textit{easily reachable}, i.e., there does not exist a trajectory with a small control, from $\vx_0$. In this way, the flow model defines as distance 
	\begin{align}
		d_{v_\vtheta}(\vx_1, \vx_0) = \inf_\vu \left\{ \int_0^1 \| \vu(\tau) \|_2^2 d\tau \bigg| \frac{d\vx_t}{dt}=v_\vtheta(\vx_t,t) + u(t), \vx(0)=\vx_0,\vx(1)=\vx_1 \right\},
	\end{align}
	which explicitly highlights the influence of the initial value $\vx_0$. We get an interesting special case if we set the flow model to zero, i.e., $v_\vtheta = 0$. In this case the dynamics reduce to $\frac{d \vx_t}{dt}=\vu(t)$ and the optimal solution is simply a straight line from $\vx_0$ to $\vx_1$ and the distance reduces to $d_0(\vx_1, \vx_0) = \| \vx_1 - \vx_0 \|^2$ and we obtain a typical Tikhonov regularisation functional.

	
\end{document}